\newtheorem{theorem}{Theorem}[section]
\newtheorem{lemma}[theorem]{Lemma}
\newtheorem{proposition}[theorem]{Proposition}
\newtheorem{corollary}[theorem]{Corollary}
\theoremstyle{definition}
\newtheorem{definition}[theorem]{Definition}
\theoremstyle{remark}
\newtheorem{remark}[theorem]{Remark}
\numberwithin{equation}{section}
\def\C{\mathbb{C}}
\def\N{\mathbb{N}}
\def\R{\mathbb{R}}
\def\DD{\mathcal{D}}
\def\EE{\mathcal{E}}
\def\HH{\mathcal{H}}
\def\OO{\mathcal{O}}
\def\SS{\mathcal{S}}
\def\dom{{\rm dom\,}}
\def\id{{\rm id\,}}
\def\Ran{{\rm Ran\,}}
\def\spect{{\rm spect\,}}
\def\tr{{\rm tr\,}}
\def\nid{\noindent}
\def\vsth{\vskip2mm}
\begin{document}

\title{On the skeleton method and an application to a quantum scissor}

%    Information for first author
\author{H.D. Cornean}
%    Address of record for the research reported here
\address{Department of Mathematical Sciences, Aalborg University, 
Fredrik Bajers Vej 7G, DK-9220 Aalborg, Denmark}

\email{cornean@math.aau.dk}
%    \thanks will become a 1st page footnote.
\thanks{The first author was supported in part by the Danish F.N.U. grant 
{\it Mathematical Physics and Partial Differential Equations}}

%    Information for second author
\author{P. Duclos}
\address{Centre de Physique Th\'eorique de Marseille UMR
  6207 - Unit\'e Mixte de Recherche du CNRS et des Universit\'es
  Aix-Marseille I, Aix-Marseille II et de l' Universit\'e du Sud
  Toulon-Var - Laboratoire affili\'e \`a la FRUMAM}
\email{duclos@univ-tln.fr}

\author{B. Ricaud}
\address{Centre de Physique Th\'eorique de Marseille UMR
  6207 - Unit\'e Mixte de Recherche du CNRS et des Universit\'es
  Aix-Marseille I, Aix-Marseille II et de l' Universit\'e du Sud
  Toulon-Var - Laboratoire affili\'e \`a la FRUMAM}
\email{ricaud@cpt.univ-mrs.fr}

%    General info
\subjclass{Primary 81Q05, 81Q10; Secondary 31A10, 31A35}
\date{December 20, 2007 and, in revised form ...}

%\dedicatory{This paper is dedicated to our wives.}

\keywords{Mathematical Quantum Mechanics, Spectral Theory}

\begin{abstract}
 In the spectral analysis of few one dimensional quantum 
particles interacting through delta potentials it is well known that one
can recast the problem into the spectral analysis of an integral
operator (the skeleton) living on the submanifold which supports the
delta interactions. We shall present several tools which allow direct
insight into the spectral structure of this skeleton. 
We shall illustrate the method on a model of
a two dimensional quantum particle interacting with two infinitely
long  straight wires which cross one another at angle $\theta$: 
the quantum scissor. 
\end{abstract}

\maketitle

%%%%%%%%%%%
\section{Introduction}%
%%%%%%%%%%%
Let us consider the
following one dimensional model of $N$ quantum particles interacting 
through delta potentials. In suitable units, the corresponding 
Hamiltonian reads as
\begin{equation}\label{NPartIntThroughDeltaPotentials}
-\sum_{i=1}^N {\Delta_i\over 2m_i}+\sum_{1\le i<j\le N}Z_iZ_j\delta(x_i-x_j),
\quad \mbox{ acting in $L^2(\R^N)$,}
\end{equation} 
where $m_i$ and $Z_i$ denote respectively the mass and  the charge of the 
$i$'th particle. When the particles are identical (i.e. all the $m_i$'s and 
$Z_i$'s are equal), it is a well known fact that this model is indeed 
exactly solvable \cite{LL, McG}; for a quick and fairly complete
review, see \cite{vD}; see also the introduction of \cite{AlGH-KH-E}. However, it is not known whether 
the model is exactly solvable if the particles are distinct, 
and we strongly suspect that it is not. We have shown in \cite{CDR1}
that one can nevertheless expect 
partial exact results, at least. To explore this eventual solvability 
we have developed a mathematical tool, that we call the 
\textit{ skeleton method}, which requires to work with a system of
integral operators. 

The main issue of this article is to give a thorough exposition of
this skeleton method, see sections~\ref{sectSkeleton} and
\ref{SectionTTheta}. Finally we shall demonstrate the power of this
tool by the spectral analysis of bound states in a model of leaky
wires that we call a \textit{quantum scissor}, see \cite{BEPS} for this terminology.

%%%%%%%%%%%%%
\subsection{Leaky wires}%
%%%%%%%%%%%%%

We shall consider the problem (\ref{NPartIntThroughDeltaPotentials}) only  in the case $N=3$,  and 
\begin{equation}\label{MassChargeConditions}
m_1=m_2>0,\quad Z_1=Z_2<0\quad{\rm and}\quad  Z_3>0
\end{equation}
with the center of mass removed. Then the Hamiltonian expressed in the relative Jacobi coordinates acts in $L^2(\R^2)$. After rescaling ( see \cite{CDR1} for more details) we have
\begin{equation}\label{RescaledThreePartHamiltonian}
H:=-{1\over 2}\Delta_x-{1\over 2}\Delta_y-\delta( A_1^\perp\cdot (x,y))-\delta( A_2^\perp\cdot (x,y))+\lambda \delta( A_3^\perp\cdot (x,y))
\end{equation}
where $A_i$, $i=1,2,3$ are three normalized vectors as shown in Figure~1
\begin{figure}[tb]
%\blankbox{.6\columnwidth}{5pc}
\includegraphics[scale=0.5]{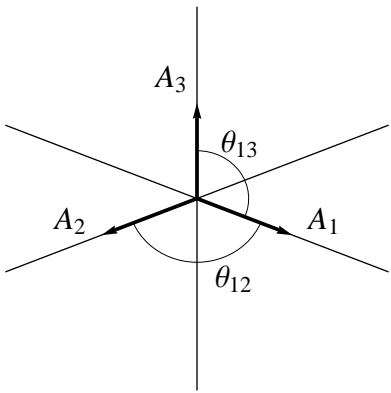}
\caption{The three supports of the $\delta$ leaky wires.}
\label{cdr3Fig1}
\end{figure}
where the angles $\theta_{i,j}$'s and $\lambda\ge 0$ depend on the original parameters $m_i$'s and $Z_i$'s.    Here $A_i^\perp$ denotes  $A_i$ rotated clockwise by $\pi/2$ and  the dot in $A_i^\perp\cdot (x,y)$ stands for the scalar product in $\R^2$. Thus $H$ in
(\ref{RescaledThreePartHamiltonian}) may be interpreted as the
Hamiltonian of a quantum particle confined to a two dimensional plane,
which interacts with three straight and infinitely long leaky wires
directed by the vectors $A_i$.  The "leaky wire" expression appears
probably for the first time in \cite{EI} . Another suitable expression  for such a quantum model is "leaky graph" which appears in \cite{EN}.

%%%%%%%%%%%%%%%%%%
\subsection{Physical applications}%
%%%%%%%%%%%%%%%%%%
Hamiltonians of the type (\ref{NPartIntThroughDeltaPotentials}) are
not 
only convenient mathematical models, but they do also describe 
physical systems when some physical parameters are pushed to a limit. 
It has been recognized long time ago, see e.g. \cite{Spr}, 
that atoms in a strong homogenous magnetic field can be modelled by 
(\ref{NPartIntThroughDeltaPotentials}), see \cite{BaSoY, BD} for a
recent mathematical treatment of this problem. Quasiparticles on
carbon nanotubes like excitons can be modelled by a system of charged 
quantum particles living at the surface of an infinitely long
cylinder, see \cite{P}. When the radius of the cylinder tends to zero,
it has been shown in \cite{CDP,CDR2} that a model of the type 
(\ref{NPartIntThroughDeltaPotentials}) { is a good effective
  Hamiltonian} for  these quasiparticles. Not only does the quantum
world provide us with such models. For example, in classical optics, photonic crystals  with a 
high contrast in the dielectric constant between the (thin) crystal
and air, can also be modelled by such a Hamiltonian, 
see \cite[\S 2]{KK} for more details.

%%%%%%%%%%%%
\section{The skeleton}\label{sectSkeleton}%
%%%%%%%%%%%%
Most of the content of this section could be obtained as a by-product of \cite{BEKS}. However we think it is worth to make public this more "operator theoretical" version.  For any normalized vector $A$ in $\R^2$ we introduce
$\tau_{A}:\HH^1(\R^2)\to L^2(\R)$ as the continuous restriction map 
\begin{equation}\label{defTauA}
\HH^1(\R^2)\ni \psi \mapsto \tau_A\psi\in L^2(\R), \quad \tau_A\psi(s):=\psi (sA).
\end{equation}
Let $g$ be a diagonal $3\times 3$ matrix with the diagonal entries
$\{g_i\}_{i=1}^3:=(-1,-1,\lambda)$. The Hamiltonian $H$ in 
(\ref{RescaledThreePartHamiltonian}) is
properly defined as the unique self-adjoint operator associated to the
closed and bounded from below quadratic form:
\begin{equation}\label{QHforH}
\HH^1(\R^2)\ni u\to {1\over 2}\|\nabla u\|^2+\sum_{i=1}^3g_i\|\tau_{A_i} u\|^2\in\R,
\end{equation}
Let us set $\tau_i:=\tau_{A_i}$ and $\tau:=(\tau_1,\tau_2,\tau_3):\HH^1(\R^2)\to \oplus_{i=1}^3 L^2(\R)$. Then $H$ may be rewritten as
\begin{equation}\label{RelativeMotionThreeParticleHamiltonian}
H=H_0+\tau^\star g\tau,\quad H_0:=-{1\over 2}\Delta,\quad \dom H_0:=\HH^2(\R^2).
\end{equation}
Notice that  the above sum defining $H$ must be understood in the sense of quadratic forms, and as a matter of fact $\dom H\ne\dom H_0$. Thanks to the particular values of the coupling constant $g_i$'s and by an application of the HVZ theorem one gets

%%%%%%%%%
\begin{lemma}\label{lemEssSpect}
%%%%%%%%%%%
For all $\lambda\ge -1$, the essential spectrum of $H$ is $[-{1\over2},\infty)$.
%%%%%%%%
\end{lemma}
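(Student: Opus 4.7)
The plan is to prove the two inclusions $[-1/2,\infty)\subset\sigma_{ess}(H)$ and $\sigma_{ess}(H)\subset[-1/2,\infty)$ separately, using the standard HVZ template adapted to the leaky wire geometry.

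For the inclusion $[-1/2,\infty)\subset\sigma_{ess}(H)$, I would construct Weyl sequences concentrated far along wire~1 (or wire~2). In coordinates $(s,t)$ parallel and perpendicular to $A_1$, the single-wire operator $h_1:=-\tfrac12\Delta-\delta(A_1^\perp\cdot(x,y))$ separates as $-\tfrac12\partial_s^2\otimes I+I\otimes(-\tfrac12\partial_t^2-\delta)$; the transverse factor has ground state $\phi(t)=e^{-|t|}$ at energy $-1/2$. For any $k\in\R$ pick a sequence of smooth cutoffs $\chi_n$ of disjoint supports translated to $+\infty$ along $A_1$ and set
\begin{equation*}
\psi_n(s,t):=\chi_n(s)\,e^{iks}\,\phi(t).
\end{equation*}
Since the wires are transversal, a thin strip around wire~1 far from the origin is disjoint from wires~2 and~3, so $\tau_2\psi_n=\tau_3\psi_n=0$ for large $n$, and $(H+\tfrac12-\tfrac{k^2}{2})\psi_n=(h_1+\tfrac12-\tfrac{k^2}{2})\psi_n$ has norm of order $\|\chi_n'\|_\infty\to 0$ while $\|\psi_n\|$ is bounded below and $\psi_n\rightharpoonup 0$. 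Varying $k$ yields every value in $[-1/2,\infty)$.

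For the opposite inclusion I would use Persson's formula
\begin{equation*}
\inf\sigma_{ess}(H)=\sup_{K\Subset\R^2}\inf\bigl\{\langle\psi,H\psi\rangle:\psi\in C_c^\infty(\R^2\setminus K),\ \|\psi\|=1\bigr\}
\end{equation*}
combined with an IMS localization. Choose $J_0,J_1,J_2,J_3\in C^\infty(\R^2)$ with $\sum_k J_k^2=1$, $J_0$ supported in a ball $B_R$, and $J_i$ ($i\ge 1$) supported in a narrow cone around wire $i$ outside $B_{R/2}$, chosen so that the cones are pairwise disjoint and $\|\nabla J_k\|_\infty=O(1/R)$. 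For $\psi$ supported outside $B_R$ one has $J_0\psi=0$, and on the support of $J_i\psi$ only wire $i$ is present, so $\tau_j(J_i\psi)=0$ for $j\ne i$ and
\begin{equation*}
\langle J_i\psi,H J_i\psi\rangle=\tfrac12\|\nabla(J_i\psi)\|^2+g_i\|\tau_i(J_i\psi)\|^2\ge -\tfrac{g_i^2}{2}\|J_i\psi\|^2\ge -\tfrac12\|J_i\psi\|^2,
\end{equation*}
using the elementary bound $\inf\spect(-\tfrac12\partial_t^2+g_i\delta)=-g_i^2/2$ when $g_i<0$ (and $\ge 0$ when $g_i\ge 0$); this step is exactly where the hypothesis $\lambda\ge -1$ enters, because it forces $|g_i|\le 1$ for $i=3$. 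Summing and adding the IMS error $-\tfrac12\sum_k\||\nabla J_k|\psi\|^2=O(R^{-2})\|\psi\|^2$ gives $\langle\psi,H\psi\rangle\ge -\tfrac12\|\psi\|^2-O(R^{-2})$, and letting $R\to\infty$ in Persson's formula yields $\inf\sigma_{ess}(H)\ge -1/2$.

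The main obstacle is the combination of two technicalities: justifying that the cone partition really kills the cross interactions $\langle J_i\psi,\tau_j^\star g_j\tau_j J_i\psi\rangle$ for $i\ne j$ (which requires choosing the opening angles of the cones small enough relative to the minimal crossing angle between the three wires, and $R$ large enough so that wires $j\ne i$ exit the cone beyond $B_R$), and verifying that the IMS formula applies to the $\HH^1$-form $(\ref{QHforH})$ despite the singular $\delta$ terms. The latter holds because multiplication by the smooth cutoffs $J_k$ preserves $\HH^1(\R^2)$ and commutes with the restriction maps $\tau_i$ in the sense that $\tau_i(J_k\psi)=(J_k|_{\text{wire}_i})\tau_i\psi$, so the form identity is purely algebraic once the cutoffs are chosen.
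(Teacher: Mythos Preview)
Your strategy is exactly the HVZ argument the paper invokes (the paper gives no details beyond that phrase), and the overall structure is sound. Two technical points need repair, however.

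First, in the Weyl-sequence half, the functions $\psi_n(s,t)=\chi_n(s)e^{iks}e^{-|t|}$ are \emph{not} supported in a thin strip: the transverse profile $e^{-|t|}$ has full support, so $\tau_2\psi_n$ and $\tau_3\psi_n$ do not vanish. What is true, and sufficient, is that $\|\tau_j\psi_n\|\to 0$ exponentially as the supports of $\chi_n$ move to infinity along $A_1$, since along wire~$j$ the transverse coordinate satisfies $|t|=|s'\sin\theta_{1,j}|\to\infty$. State this decay instead of the vanishing claim, or alternatively multiply $\phi$ by a smooth cutoff in $t$ and control the extra error.

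Second, your IMS partition cannot satisfy $\sum_k J_k^2=1$ as written: if $J_0$ is supported in $B_R$ and each $J_i$, $i=1,2,3$, in a narrow double cone about wire~$i$ with the cones pairwise disjoint, then the union misses the sectors strictly between the wires outside $B_R$. You need an additional member $J_4$ supported in that free region; there none of the $\tau_i$ acts, so
\[
\langle J_4\psi,\,HJ_4\psi\rangle=\tfrac12\|\nabla(J_4\psi)\|^2\ge 0\ge -\tfrac12\|J_4\psi\|^2,
\]
and the summation and $R\to\infty$ limit go through unchanged. With these two fixes the argument is complete and is precisely a spelled-out version of the HVZ application the paper cites.
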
%
%%%%%%%%

We want to show that the eigenvalue problem $H\Psi=E\Psi$ for $E<-{1\over2}$, i.e. 
below the essential spectrum can be reduced to a one-dimensional eigenvalue problem
involving integral operators. Using Krein's formula  with $R(z):=(H-z)^{-1}$,
$R_0(z):=(H_0-z)^{-1}$ we get at once: 
\begin{equation}\label{KreinsFormula}
R(z)=R_0(z)-R_0(z)\tau^\star(g^{-1}+\tau R_0(z)\tau^\star)^{-1}\tau
R_0(z),\quad z\in\rho(H_0)\cap\rho(H).
\end{equation}
By classical Sobolev trace theorems the following operators are continuous:
$$
\tau R_0(z):L^2(\R^2)\to\bigoplus_{i=1}^3\HH^{3\over2}(\R),
\quad 
\tau R_0(z)\tau^\star:\bigoplus_{i=1}^3 \HH^{s}(\R) \to \bigoplus_{i=1}^3\HH^{s+1}(\R)
$$
for all $z\notin\spect H_0$ and all $s\in\R$. This allows to consider $g^{-1}+\tau R_0(z)\tau^\star$ as a bounded operator on $\SS:=\oplus_{i=1}^3 L^2(\R)$ when $z\notin \R_+$.
%%%%%%%%%
\begin{definition}\label{SkeletonDef}%
%%%%%%%%%
We shall call $S(k):=g^{-1}+\tau R_0(-k^2)\tau^\star$ the {\em skeleton} of $H$ at energy $-k^2$.
%%%%%%%%%
\end{definition}%
%%%%%%%%
%%%%%%%%%%%%%%%%
\begin{theorem}\label{HtoS}%
%%%%%%%%%%%%%%%%
$E<-{1\over2}$ is an eigenvalue of $H$ iff $\ker(  g^{-1}+\tau
R_0(E)\tau^\star)\ne\{0\}$. If $P$ is the orthogonal projector on this
kernel, then the multiplicity of $E$ is equal to the dimension of $P$.
In addition, the operator $P\tau R_0^2(E)\tau^\star P$ is invertible
on the range of $P$, and the eigenprojector of $H$ associated to $E$ is given by
$$
R_0(E)\tau^\star\left(P\tau R_0^2(E)\tau^\star P\right)^{-1}\tau R_0(E).
$$
%%%%%%%%%
\end{theorem}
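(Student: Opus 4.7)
The plan is to leverage the form equation \eqref{QHforH} together with the bijection $\Psi \leftrightarrow \psi := -g\tau\Psi$ between eigenvectors of $H$ at $E$ and elements of $\ker S(E)$, and then to read off both the multiplicity claim and the projector formula from this correspondence. I handle the three assertions in turn.

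For the forward implication, assume $H\Psi = E\Psi$ with $\Psi \in \dom H \subset \HH^1(\R^2)$ and set $\psi := -g\tau\Psi \in \SS$. Testing \eqref{QHforH} against an arbitrary $u \in \HH^1(\R^2)$ rearranges to
\[
\tfrac12\langle\nabla u,\nabla\Psi\rangle - E\langle u,\Psi\rangle = \langle u,\tau^\star\psi\rangle,
\]
which is the weak form of $(H_0-E)\Psi = \tau^\star\psi$ in the $\HH^{-1}$--$\HH^1$ duality. Because $E<-\tfrac12$ lies below $\spect H_0$ the form $\tfrac12\|\nabla\cdot\|^2 - E\|\cdot\|^2$ is coercive, so $R_0(E)$ extends to a bijection $\HH^{-1}(\R^2)\to\HH^1(\R^2)$ and $\Psi = R_0(E)\tau^\star\psi$. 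Applying $\tau$ and using $\tau\Psi = -g^{-1}\psi$ gives $S(E)\psi = 0$, and $\psi\ne 0$ since otherwise $\Psi=0$. Conversely, given $\psi \in \ker S(E)\setminus\{0\}$ I set $\Psi := R_0(E)\tau^\star\psi$, which lies in $\HH^{3/2}(\R^2)\subset\HH^1(\R^2)$ by the trace mapping properties stated before Definition~\ref{SkeletonDef}. Combining $(H_0-E)\Psi = \tau^\star\psi$ with the identity $\tau\Psi = -g^{-1}\psi$ (immediate from $S(E)\psi=0$) verifies the form version of $(H-E)\Psi=0$ on every test function, while $R_0(E)\tau^\star\psi=0$ would force $\tau^\star\psi=0$ and hence $\psi=0$. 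The two maps $\Psi\mapsto -g\tau\Psi$ and $\psi\mapsto R_0(E)\tau^\star\psi$ are thus mutually inverse linear bijections between $\ker(H-E)$ and $\ker S(E)=\Ran P$, giving equality of the multiplicities.

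For the last assertion set $G := P\tau R_0^2(E)\tau^\star P$. Self-adjointness of $R_0(E)$ at real $E$ yields, for $\psi\in\Ran P$,
\[
\langle\psi,G\psi\rangle = \|R_0(E)\tau^\star\psi\|^2 = \|\Psi\|^2,
\]
which is strictly positive when $\psi\ne 0$ by the converse direction; hence $G$ is invertible on $\Ran P$. Put $K := R_0(E)\tau^\star P G^{-1} P \tau R_0(E)$. Self-adjointness of $K$ follows from that of $G$, $P$ and $R_0(E)$. Its range lies in $\ker(H-E)$ since $PG^{-1}P\tau R_0(E)f \in \Ran P = \ker S(E)$ and the converse direction turns it into an eigenvector. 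Finally, for $\Psi = R_0(E)\tau^\star\psi$ in the eigenspace the identity $P\tau R_0(E)\Psi = P\tau R_0^2(E)\tau^\star\psi = G\psi$ gives $K\Psi=\Psi$, so $K$ is a self-adjoint idempotent with range exactly the eigenspace, i.e.\ the spectral projector.

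The main obstacle will be the functional-analytic bookkeeping in the first step: because $\dom H \ne \dom H_0$, the splitting $H\Psi = E\Psi \Leftrightarrow (H_0-E)\Psi = -\tau^\star g\tau\Psi$ is meaningful only in the $\HH^1$--$\HH^{-1}$ duality, and one must invoke the corresponding extension of $R_0(E)$ and the continuity of $\tau,\tau^\star$ between the appropriate Sobolev scales in a consistent way. Once this setup is in place, the rest reduces to short algebraic manipulations with $S(E)$ and $P$.
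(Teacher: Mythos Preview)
Your proof is correct and takes a genuinely different route from the paper. The paper argues indirectly via resolvent analysis: it first locates the essential spectrum of $S(k)$ to ensure that $0$ can at worst be an isolated eigenvalue, then uses Krein's formula \eqref{KreinsFormula} together with the identity $(g^{-1}+\tau R_0(z)\tau^\star)^{-1}=g-g\tau R(z)\tau^\star g$ and the Laurent expansion of $R(z)$ at $E$ to derive $P=g\tau P(E)\tau^\star g\tau R_0(E)^2\tau^\star P$, yielding $\dim P\le\dim P(E)$; the reverse inequality and the projector formula come from a Feshbach decomposition of $g^{-1}+\tau R_0(z)\tau^\star$ with respect to $P\oplus({\rm id}-P)$. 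You bypass all of this by exhibiting the explicit mutually inverse linear maps $\Psi\mapsto -g\tau\Psi$ and $\psi\mapsto R_0(E)\tau^\star\psi$ directly from the quadratic-form identity, which gives the equivalence and the multiplicity equality in one stroke; the projector formula then reduces to a short verification that $K$ is a self-adjoint idempotent with the correct range. Your argument is more elementary and constructive and does not need the essential-spectrum computation for $S(k)$ (finite-dimensionality of $\ker S(E)$ follows a posteriori from the bijection with the discrete eigenspace of $H$); the paper's resolvent-based approach sits more naturally within the Krein-formula framework and would adapt more readily when only the resolvent identity, rather than explicit form control, is available. One small point worth making explicit: the step ``$\tau^\star\psi=0\Rightarrow\psi=0$'' relies on $\tau$ having dense range in $\SS$, which is true but not stated.
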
%
%%%%%%%%%

\begin{proof} 
1. We start by showing that $\forall E:=k^2<-{1\over2}$ the essential spectrum of $S(k)$ obeys: for all $\lambda\ge0$ 
$$
0\notin \spect_{\rm ess}  S(k)=\spect_{\rm ac}  S(k)=[-1,-1+{1\over\sqrt{2}k}]\cup[\lambda^{-1},\lambda^{-1}+{1\over\sqrt{2}k}].
$$
Indeed if one sets
$$
T_{\theta_{i,j}}:= \tau_i R_0(z)\tau_j^\star
$$ then
\begin{equation}\label{leSquelette}
S(k)=
\begin{pmatrix}
-1+T_0 & 0    & 0\cr
                                     0   &-1+T_0 & 0\cr
                                     0   & 0    &\lambda^{-1}+T_0\cr
\end{pmatrix}
                                     +
\begin{pmatrix} 0             & T_{\theta_{1,2}}  & T_{\theta_{2,3}} \cr
    T_{\theta_{1,2}}   &0                  & T_{\theta_{2,3}} \cr
 T_{\theta_{2,3}}   &  T_{\theta_{2,3}}    &0\cr
\end{pmatrix} 
\end{equation} 
Since the diagonal of the first matrix consists of multiplication
operators (in the Fourier
representation see (\ref{detT0})) and the entries of the second matrix 
are all trace class operators (see Theorem~\ref{ROODTh}), we are
done. That $0\notin \spect_{\rm ess} S(k)$ is now obvious.

\vskip1mm
\nid 2. Assume that $E<-{1\over2}$ is an eigenvalue of
$H$, but $0$ is not an
eigenvalue of $S(k)$. Then $S(k)$ has a bounded inverse 
(after an easy application of the Fredholm alternative). Since
$R_0(E)$ and $\tau R_0(E)$ are bounded
operators, it means that $R(z)$ is also bounded at $z=E$. This 
contradicts the fact that $E$ is an eigenvalue of $H$. We conclude
that $S(k)$ cannot be invertible (injective) if $-k^2$ coincides with
an eigenvalue of $H$.

\vskip1mm
\nid 3. Now let us prove that all singularities of $S(k)^{-1}$
  correspond to eigenvalues of $H$. One has the identity
\begin{equation}\label{BWId}
(g^{-1}+\tau R_0(z)\tau^\star)^{-1}=g-g\tau R(z)\tau^\star g
\end{equation}
valid for $z$ where at least one and therefore two members of this identity exists. Now
assume that for some
$E<-1/2$, the operator 
$g^{-1}+\tau R_0(E)\tau^\star$ is not invertible (i.e. not injective in
our case). Assume also that $E$ is not in the (discrete) spectrum of
$H$. Then (\ref{BWId}) implies that in a small disc around $E$ we
have that $(g^{-1}+\tau R_0(z)\tau^\star)^{-1}$ is uniformly bounded,
which means that $g^{-1}+\tau R_0(E)\tau^\star$ is invertible by Neumann
series, contradiction. 
\vskip1mm
\nid 4. Now let us investigate the dimension of the spectral
  subspace associated to an eigenvalue.  Assume that $0$ is an eigenvalue of $g^{-1}+\tau
R_0(E)\tau^\star$ and let $P$ be the finite dimensional associated 
eigenprojector. We have shown that $E$ is also an eigenvalue of $H$,
and denote by $P(E)$ its finite dimensional projection. We want to
prove here that ${\rm dim}(P)= {\rm dim}(P(E))$. 

Since $(g^{-1}+\tau R_0(E)\tau^\star)P=0$ and using the resolvent identity:
\begin{equation}\label{pierre1}
(g^{-1}+\tau R_0(z)\tau^\star)P=(z-E)\tau R_0(z)R_0(E)\tau^\star P.
\end{equation}
Using (\ref{BWId}), and knowing that near $E$ we have
\begin{equation}\label{pierre2}
(z-E)R(z)=-P(E)+ {\mathcal O}((z-E)),
\end{equation}
it follows that 
\begin{align}\label{pierre3}
P&=(z-E)(g^{-1}+\tau R_0(z)\tau^\star)^{-1}\tau R_0(z)R_0(E)\tau^\star P \\&=
g  \tau P(E)\tau^\star  g \tau R_0(z)R_0(E)\tau^\star P + {\mathcal O}((z-E)).\nonumber 
\end{align}
Taking the limit $z\to E$ we obtain
\begin{equation}\label{pierre4}
P=g  \tau P(E)\tau^\star  g \tau R_0(E)^2\tau^\star P .
\end{equation}
If ${\rm Ran}(P(E))$ is spanned by the eigenvectors $\{\psi_j\}_{j=1}^{{\rm
    dim}(P(E))}$, then (\ref{pierre4}) says that ${\rm Ran}(P)$ is
    spanned by  $\{g\tau \psi_j\}_{j=1}^{{\rm
    dim}(P(E))}$, therefore 
$${\rm dim}(P)\leq  {\rm dim}(P(E)).$$

We now want to prove the reverse inequality. Denote by 
$Q:={\rm id}-P$, so that in matrix notation (use (\ref{pierre1}) and
its adjoint)
$$
g^{-1}+\tau R_0(z)\tau^\star 
=(z-E)\left(\begin{array}{cc}
P\tau R_0(z)R_0(E)\tau^\star P&P\tau R_0(z)R_0(E)\tau^\star Q\cr
Q\tau R_0(z)R_0(E)\tau^\star P&(z-E)^{-1}Q(g^{-1}+\tau R_0(z)\tau^\star) Q \cr
\end{array}
\right).
$$
To invert this matrix we use the Feshbach method. One has
(i) the operator $Q(g^{-1}+\tau R_0(E)\tau^\star) Q$ has a bounded
inverse on the range of $Q$, thus $(z-E)^{-1}Q(g^{-1}+\tau R_0(z)\tau^\star) Q$ is bounded invertible for $z$ in a
neighbourhood of $E$, except eventually at $E$, and (ii) 
the following operator is
bounded invertible at least in a neighbourhood of $E$
$$
A(z):= P\tau R_0(z)R_0(E)\tau^\star P- P\tau
R_0(z)R_0(E)\tau^\star Q(z-E)$$
$$\cdot (Q(g^{-1}+\tau
R_0(z)\tau^\star) Q)^{-1}Q\tau R_0(z)R_0(E)\tau^\star P.
$$
Notice that this operator is nothing but a finite dimensional matrix,
acting in ${\rm Ran}(P)$. 
The Feshbach formula says that the above operator $A(z)$ is invertible if and
only if $( g^{-1}+\tau R_0(z)\tau^\star )/(z-E)$ is invertible. Moreover,
for $z$ in a neighborhood of $E$ 
this formula gives:
$$ (z-E)P (g^{-1}+\tau R_0(z)\tau^\star)^{-1} P= A(z)^{-1}, \quad
z\neq E.$$
Using again (\ref{BWId}) and (\ref{pierre2}), we obtain 
$$ A(z)^{-1}= Pg\tau P(E)\tau^\star  g P +{\mathcal O}(z-E), \quad
z\neq E.$$
This inverse is bounded near $E$, and $A(z)$ is continuous at $z=E$,
hence $A(E)$ is invertible and 
\begin{equation}\label{pierre7}
A(E)^{-1}=\left \{ P \tau R_0(E)^2\tau^\star  P\right \}^{-1}= Pg\tau
P(E)\tau^\star  g P.
\end{equation}
Summarizing, via the Feshbach formula, we obtain that 
\begin{equation}\label{pierre8}
(z-E) (g^{-1}+\tau R_0(z)\tau^\star)^{-1} = A(z)^{-1}+{\mathcal O}((z-E)).
\end{equation}
Multiply (\ref{BWId}) with $(z-E)$, use (\ref{pierre2}), (\ref{pierre8}), and take the
limit $z\to E$. This gives:
\begin{equation}\label{pierre9}
P(E)= R_0(E)\tau^\star  A(E)^{-1}\tau  R_0(E)=R_0(E)\tau^\star  Pg\tau
P(E)\tau^\star  g P\tau R_0(E).
\end{equation}
Now assume that $\{\phi_j\}_{j=1}^{{\rm dim}(P)}$ are eigenvectors
spanning the range of $P$. Then (\ref{pierre9}) says that the range of
$P(E)$ is spanned by $\{R_0(E)\tau^\star \phi_j\}_{j=1}^{{\rm dim}(P)}$,
which implies 
$${\rm dim}(P(E))\leq  {\rm dim}(P)$$
and we are done. 
\end{proof}

%%%%%%%%%%%%%%%%%%
\section{The $T_\theta$ operators}\label{SectionTTheta}%
%%%%%%%%%%%%%%%%%%
In this section we shall establish various properties of the 
$T_{\theta_{i,j}}$ operators. 

%%%%%%%%%%%%%
\subsection{Generalities}%
%%%%%%%%%%%%%

Let $A$ and $B$ be two normalized vectors of $\R^2$. We shall consider
$
\tau_A R_0(-k^2)\tau_B^\star
$
where $\tau_A$, $\tau_B$ are defined by (\ref{defTauA}). We can
  obtain explicit formulas for their integral kernels using the
  Fourier transform that we denote by a hat. 
We summarize the results in the following technical lemma:

%%%%%%%%%
\begin{lemma}%
%%%%%%%%%
The operator $\tau_A R_0(-k^2)\tau_B^\star$ depends only on the angle 
$\theta$ between the vectors $A$ and $B$. When $\det(A,B)\ne 0$, 
the Fourier transform of $\tau_A R_0(-k^2)\tau_B^\star$ is an integral 
operator with kernel
\begin{equation}\label{detTthetaKernel}
\hat T_\theta(t,s;k)={1\over2\pi|\sin(\theta)|}{1\over
{t^2-2\cos(\theta) ts +s^2\over 2\sin^2(\theta)}+k^2}.
\end{equation}
When $A=B$, the Fourier transform of  $\tau_A R_0(-k^2)\tau_A^\star$ 
is the multiplication operator given by the function
\begin{equation}\label{detT0}
\hat T_0(s;k):={1\over\sqrt{s^2+2k^2}}.
\end{equation}
%%%%%%%%
\end{lemma}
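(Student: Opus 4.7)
The plan is to work entirely in the Fourier representation, where $R_0(-k^2)$ is simply multiplication by $2/(|p|^2+2k^2)$. First I would reduce to the claim that the operator depends only on the angle $\theta$ between $A$ and $B$: since $H_0=-\Delta/2$ commutes with the unitary $U_\alpha$ implementing rotation of $\R^2$ by any angle $\alpha$, and since a direct check on (\ref{defTauA}) gives $\tau_A U_\alpha=\tau_{U_\alpha^{-1}A}$, one has
$$\tau_A R_0(-k^2)\tau_B^\star=\tau_{U_\alpha^{-1}A}R_0(-k^2)\tau_{U_\alpha^{-1}B}^\star$$
for every $\alpha\in\R$; choosing $\alpha$ so as to rotate $A$ onto the first coordinate axis shows that only the angle $\theta$ between $A$ and $B$ survives.

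Next I would record the Fourier images of $\tau_A$ and $\tau_A^\star$. Using the orthonormal basis $(A,A^\perp)$ of $\R^2$ and writing $p=tA+rA^\perp$, a short computation with the standard unitary Fourier conventions yields
$$\widetilde{\tau_A\psi}(t)=\frac{1}{\sqrt{2\pi}}\int_\R \hat\psi(tA+rA^\perp)\,dr,\qquad \widehat{\tau_A^\star f}(p)=\frac{1}{\sqrt{2\pi}}\,\tilde f(p\cdot A),$$
where the hat denotes the 2D Fourier transform and the tilde the 1D one. Composing the three operators, and using $|tA+rA^\perp|^2=t^2+r^2$ together with $(tA+rA^\perp)\cdot B=t\cos\theta+\varepsilon\, r\sin\theta$ (where $\varepsilon=\pm 1$ is the sign of $A^\perp\cdot B$), I arrive at
$$\widetilde{(\tau_A R_0(-k^2)\tau_B^\star f)}(t)=\frac{1}{\pi}\int_\R \frac{\tilde f(t\cos\theta+\varepsilon r\sin\theta)}{t^2+r^2+2k^2}\,dr.$$

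The two cases of the lemma now drop out by elementary manipulation of this last integral. If $\sin\theta\ne 0$, the change of variables $r\mapsto s:=t\cos\theta+\varepsilon r\sin\theta$ produces a Jacobian $|\sin\theta|^{-1}$ and converts the denominator via $t^2+r^2=(t^2-2\cos(\theta)ts+s^2)/\sin^2\theta$; after regrouping factors of $2$ this is precisely (\ref{detTthetaKernel}). If instead $A=B$, the factor $\tilde f(t)$ leaves the integral and the elementary evaluation $\int_\R dr/(t^2+r^2+2k^2)=\pi/\sqrt{t^2+2k^2}$ yields the multiplication formula (\ref{detT0}). The only step requiring a little care is the rigorous interpretation of $\widehat{\tau_A^\star f}$, since $\tau_A^\star f$ is only a tempered distribution supported on the line $\R A$; however this is already covered by the Sobolev trace bounds invoked before Definition~\ref{SkeletonDef}, which ensure that the full composition is a bounded operator on $L^2(\R)$, so I do not expect any real obstacle there.
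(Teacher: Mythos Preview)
Your computation is correct and complete: the Fourier representation of $R_0(-k^2)$ as multiplication by $2/(|p|^2+2k^2)$, the formulas for $\widetilde{\tau_A\psi}$ and $\widehat{\tau_A^\star f}$, the change of variables $s=t\cos\theta+\varepsilon r\sin\theta$, and the residue-type integral for the diagonal case all check out line by line. The paper itself simply declares the lemma ``elementary and left to the reader,'' so there is no alternative proof to compare with; what you have written is precisely the direct Fourier computation the authors are gesturing at.
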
%
%%%%%%%%
\vsth
\nid The proof of this lemma is elementary and left to the reader. 
\vsth
%%%%%%%%
\begin{remark}\label{remTTheta}%
%%%%%%%%
\ (a) One has:
$$
\forall \theta\in(-\pi,\pi),\quad \|T_\theta(k)\|\le{1\over\sqrt{2}k}.
$$
This is clear for the case $\theta=0$ since then $\hat T_0(k)$ is an explicit multiplication operator. For $\theta\ne 0$ we use
$$
\|\tau_A R_0\tau_B^\star\|^2\le\|\tau_A R_0^{1\over2}\|^2\|R_0^{1\over2}\tau_B^\star\|^2=
\|\tau_A R_0\tau_A^\star\|\|\tau_B R_0\tau_B^\star\|=\|T_0\|^2.
$$
One can also compute explicitly the Hilbert-Schmidt norm of $T_\theta$:
$$
\|T_\theta(k)\|_{\rm HS}^2={1\over 2\pi\sin(\theta) k^2},\quad \theta\ne0\ \mbox{mod}\, \pi .
$$
\vskip1mm
\nid(b) If we perform the scaling $s\to ks$ then clearly $\hat T_\theta(k)$ becomes $k^{-1}\hat T_\theta(1)$. Since in the sequel we shall use this property and work only with $\hat T_\theta(1)$, we  denote
$$
\hat T_\theta:=\hat T_\theta(1).
$$
\vskip1mm
\nid(c) Let $\Pi:L^2(\R)\to L^2(\R)$ denote the parity operator $\Pi\varphi(s)=\varphi(-s)$. Then 
$
[\Pi,\hat T_\theta]=0
$
so that one can decompose
$$
\hat T_\theta=\hat T_\theta^+\oplus \hat T_\theta^-,\quad {\rm with}\quad 
\hat T_\theta^\pm:= {1\pm\Pi\over2}\hat T_\theta.
$$
\vskip1mm
\nid(d)  By a simple inspection of  (\ref{detTthetaKernel}) we have the reflection properties:
$$
\forall \theta\in(0,\pi),\quad T_{\pi-\theta}^\pm=\pm T_{\theta}^\pm
$$
%%%%%%%%
\end{remark}%
%%%%%%%%

%%%%%%%%%%%%%%%%%%%%%%%%%%%%
\subsection{Rank one operator decomposition of $\hat T_\theta$}%
%%%%%%%%%%%%%%%%%%%%%%%%%%%%
Let us first consider  $\hat T_{\pi\over2}$; we have the formula
\begin{equation}\label{BrummelhuisFormula}
\hat T_{\pi\over2}(p,q)={1\over \pi}{1\over p^2+q^2+2}={1\over\pi}\int_0^\infty e^{-2s} e^{-sp^2}e^{-sq^2}ds
\end{equation}
which shows that $T_{\pi\over2}$ is a "sum" of positive rank one operators so that $\hat T_{\pi\over2}\ge0$. Since $(p,q)\to T_{\pi\over2}(p,q)$ is continuous and 
$$
 \int_\R \hat T_{\pi\over2}(p,p)dp={1\over2}<\infty
$$
this shows in view of \cite[th\,2.12]{Si}, that $T_{\pi\over2}$ is trace class and that its trace and therefore its trace norm are $1/2$. We are indebted to R. Brummelhuis who showed us the trick (\ref{BrummelhuisFormula}). The above derivation can be generalized to any angle $0<\theta<\pi$ as follows.

%%%%%%%%%
\begin{theorem}\label{ROODTh} %
%%%%%%%%%%
For all $\theta\in(0,\pi)$ one has  in the trace norm ( $\|\cdot\|_1$) topology
\begin{eqnarray}
\hat T_\theta&=&2^{-{1\over2}}{\sin(\theta)\over\pi}\sum_{n\in\N}
{\Gamma(n+{1\over2})\over \Gamma(n+1)} \cos^n(\theta)\int_0^\infty ds\,s^{-{1\over2}}
e^{-2\sin^2(\theta) s}  P_{n,s}\label{RODofTTheta}
\\
\hat T_\theta^+
&=&
2^{-{1\over2}}{\sin(\theta)\over\pi} \sum_{n=0}^\infty {\Gamma(2n+{1\over2})\over \Gamma(2n+1)} \cos^{2n}(\theta)\int_0^\infty ds\, s^{-{1\over2}}
e^{-2\sin^2(\theta) s} P_{2n,s} 
\label{decTThetaPlusEnRang1}\nonumber\\
\hat T_\theta^-
&=&
2^{-{1\over2}}{\sin(\theta)\over\pi} \sum_{n=0}^\infty 
{\Gamma(2n+{3\over2})\over \Gamma(2n+2)} 
\cos^{2n+1}(\theta)\int_0^\infty ds s^{-{1\over2}}
e^{-2\sin^2(\theta) s}P_{2n+1,s} \nonumber
\end{eqnarray}
where $P_{n,s}$ denotes the rank one orthogonal projector on the vector $g_{n,s}$ defined by 
\begin{equation}\label{vectorgns}
g_{n,s}(p):=\sqrt{(2s)^{n+{1\over2}}\over\Gamma(n+{1\over2})} p^n e^{-p^2 s}.
\end{equation}
Accordingly one has
\begin{equation}\label{SignOfTTheta}
\forall \theta\in(0,{\pi\over2}],\quad \hat T_\theta^\pm\ge0\quad{\rm and}\quad
\forall \theta\in[{\pi\over2},\pi),\quad \pm\hat  T_\theta^\pm\ge0.
\end{equation}
It follows that $T_\theta$ and $T_\theta^\pm$ are trace class and
\begin{eqnarray*}
\tr T_\theta^+&=&\|T_\theta^+\|_1= 
\frac{ \cos \left(\frac{\theta }{2}\right)+\sin
   \left(\frac{\theta }{2}\right)}{2 \sqrt{2}\sin(\theta)}\\
   \tr T_\theta^-&=& 
\frac{ \cos \left(\frac{\theta }{2}\right)-\sin
   \left(\frac{\theta }{2}\right)}{2 \sqrt{2}\sin(\theta)},\quad \|T_\theta^-\|_1=
   \frac{ \left|\cos \left(\frac{\theta }{2}\right)-\sin
   \left(\frac{\theta }{2}\right)\right|}{2 \sqrt{2}\sin(\theta)}
   \\
\tr T_\theta&=& 
{1\over 2\sqrt{2}\sin\left(\theta\over2\right)},\quad \|T_\theta\|_1=
\frac{ \max\left\{\cos \left(\frac{\theta }{2}\right),\sin
   \left(\frac{\theta }{2}\right)\right\}}{ \sqrt{2}\sin(\theta)}
\end{eqnarray*}
%%%%%%%%
\end{theorem}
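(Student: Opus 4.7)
The plan is to derive (\ref{RODofTTheta}) directly from the kernel (\ref{detTthetaKernel}) by a generalization of the Brummelhuis trick used for $\theta=\pi/2$. At $k=1$, writing $1/a=\int_0^\infty e^{-au}du$, the kernel becomes
\begin{equation*}
\hat T_\theta(t,s)=\frac{\sin\theta}{\pi}\int_0^\infty e^{-u(t^2+s^2+2\sin^2\theta)}\,e^{2u\cos\theta\,ts}\,du.
\end{equation*}
Expanding the second exponential as $\sum_{n\ge 0}(2u\cos\theta)^n(ts)^n/n!$, each summand becomes a Gaussian kernel $t^n s^n e^{-u(t^2+s^2)}$, which is exactly $\frac{\Gamma(n+1/2)}{(2u)^{n+1/2}}$ times the rank-one projector kernel $P_{n,u}(t,s)$ read off from (\ref{vectorgns}). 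Collecting constants gives (\ref{RODofTTheta}). The decompositions of $\hat T_\theta^\pm$ follow on observing that $g_{n,u}$ has parity $(-1)^n$, so $P_{2n,u}$ and $P_{2n+1,u}$ live respectively in the even- and odd-parity subspaces of $L^2(\R)$.

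The main technical point is to justify, in trace-norm topology, both the interchange of series and integral and the convergence of the resulting series. I would treat $\theta\in(0,\pi/2]$ first: there $\cos\theta\ge 0$ makes every term a positive scalar multiple of a positive rank-one operator, so Tonelli reduces the question to convergence of $\sum_n c_n \int_0^\infty u^{-1/2}e^{-2u\sin^2\theta}\,\tr P_{n,u}\,du$. An elementary Gamma integral shows $\|g_{n,u}\|=1$, hence $\tr P_{n,u}=1$, the $u$-integral equals $\sqrt{\pi/(2\sin^2\theta)}$, and the remaining $n$-series is the binomial series for $(1-\cos\theta)^{-1/2}$, which converges whenever $|\cos\theta|<1$. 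For $\theta\in[\pi/2,\pi)$ the coefficients $\cos^n\theta$ alternate, but after splitting into even and odd $n$ each sub-series has coefficients of definite sign and the same Tonelli argument applies separately to $\hat T_\theta^+$ and to $-\hat T_\theta^-$.

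The sign statement (\ref{SignOfTTheta}) is then immediate. For the traces I would combine the above computation with the generating function
\begin{equation*}
\sum_{n\ge 0}\frac{\Gamma(n+1/2)}{\Gamma(n+1)}x^n=\frac{\sqrt\pi}{\sqrt{1-x}},\qquad |x|<1,
\end{equation*}
evaluated at $x=\cos\theta$ for $\tr T_\theta$, and at $x=\pm\cos\theta$ (taking half-sum and half-difference) for $\tr T_\theta^\pm$; the half-angle identities $1\pm\cos\theta=2\cos^2(\theta/2)$ or $2\sin^2(\theta/2)$ then produce the closed forms. Since each $T_\theta^\pm$ has a definite sign on its parity subspace we have $\|T_\theta^\pm\|_1=|\tr T_\theta^\pm|$; and because the two parity subspaces are orthogonal, $\|T_\theta\|_1=\|T_\theta^+\|_1+\|T_\theta^-\|_1$, which after a short case analysis on the sign of $\cos(\theta/2)-\sin(\theta/2)$ collapses to $\max\{\cos(\theta/2),\sin(\theta/2)\}/(\sqrt 2\sin\theta)$. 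I expect the main obstacle to be the trace-norm convergence step of the second paragraph---everything else is routine bookkeeping with the binomial series and half-angle identities.
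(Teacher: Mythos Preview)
Your proposal is correct and follows essentially the same route as the paper. The only cosmetic difference is the order of the two elementary operations: the paper first expands $1/(A-B)=\sum_n B^n A^{-n-1}$ as a geometric series and then writes $A^{-n-1}=\int_0^\infty \frac{s^n}{n!}e^{-sA}\,ds$, whereas you first write $1/(A-B)=\int_0^\infty e^{-u(A-B)}\,du$ and then Taylor-expand $e^{uB}$; both produce the identical double series. For trace-norm convergence the paper simply evaluates the majorizing sum with $|\cos\theta|$ in one shot (getting $\frac{1}{2\sqrt{1-|\cos\theta|}}$), while you split into the ranges $(0,\pi/2]$ and $[\pi/2,\pi)$ and into parities; your version is a touch more explicit but the content is the same, and the paper likewise dismisses the closed-form trace identities as ``tedious explicit computations of sums.''
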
%
%%%%%%%%

\begin{proof} To find the rank one operator decomposition of $\hat T_\theta$ we simply expand its kernel as follows. Let $A:= p^2+q^2+2\sin^2\theta$ and $B:=2pq \cos(\theta)$, one can easily check  that $A>0$ and $|B/A|<1$ for all $0<\theta<\pi$. Thus one has
\begin{eqnarray*}
&&
{1\over2\pi\sin(\theta)}{1\over{p^2-2\cos(\theta) pq +q^2\over 2\sin^2(\theta)}+1}
=
{\sin\theta\over\pi} {1\over A-B}
=
{\sin(\theta)\over\pi} \sum_{n=0}^\infty A^{-1}\left(B\over A\right)^n
\\
&=&{\sin(\theta)\over\pi}\sum_{n=0}^\infty B^n\int_0^\infty  ds{s^ne^{-sA}\over n!}
\\
&=&{\sin(\theta)\over\pi}\sum_{n=0}^\infty 2^n{\cos^n(\theta)\over n!}
\int_0^\infty ds\, e^{-2s\sin^2(\theta)}s^ne^{-s(p^2+q^2)} (pq)^n.
\end{eqnarray*}
To arrive at (\ref{RODofTTheta}) one needs to normalize in $L^2(\R)$ the vector $p\to p^ne^{-s p^2}$ which gives the vector $g_{n,s}$ in (\ref{vectorgns}). Since $\|P_{n,s}\|=\|P_{n,s}\|_1$ the convergence in the trace norm topology of the r.h.s. of (\ref{RODofTTheta}) is true since the terms in the following sum are all positive and one has explicitly:
$$
2^{-{1\over2}}{\sin(\theta)\over\pi}\sum_{n\in\N}
{\Gamma(n+{1\over2})\over \Gamma(n+1)} |\cos^n(\theta)|\int_0^\infty ds\,s^{-{1\over2}}
e^{-2\sin^2(\theta) s} ={1\over2\sqrt{1-|\cos(\theta)|}}.
$$
This shows that $\hat T_\theta$ is trace class and that (\ref{RODofTTheta}) is valid in the trace norm topology. Since $g_{n,s}$ has the parity of $n$, one  gets $\hat T_\theta^\pm$ by selecting the even and odd values of $n$ in (\ref{RODofTTheta}) resp.. The rest is now obvious up to some tedious explicit computations of sums.
\end{proof}
We shall draw some other useful  properties from the above theorem.

%%%%%%%%%
\begin{corollary}\label{corTTheta}%
%%%%%%%%%
(i) $\theta\to T_\theta$ is a selfadjoint analytic family as a map
from $\DD:=\{\theta\in\C,\, |\cos(\theta)|<1\}$ with values in the
ideal of trace
class operators.\vskip1mm
\nid
(ii) If one  labels the eigenvalues of $T_\theta^+$ by descending order:
$E_1^+(\theta)\ge E_2^+(\theta)\ge \ldots\ge E_n^+(\theta)\ge \ldots$ then each function $(0,\pi)\ni\theta\to E_n^+(\theta)$ is continuous and decreasing on $(0,{\pi\over2}]$ and increasing on $[{\pi\over2},\pi)$.  
\vskip1mm
\noindent
If one labels the eigenvalues of $T_\theta^-$ by descending order on $(0,{\pi\over2})$ and ascending order on  $({\pi\over2},\pi)$, then each function  $(0,\pi)\ni\theta\to E_n^-(\theta)$ is continuous and decreasing on $(0,\pi)$.
%%%%%%%%%
\end{corollary}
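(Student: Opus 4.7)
I would deduce both parts from the rank--one decomposition~(\ref{RODofTTheta}) of Theorem~\ref{ROODTh}. For (i) I would simply check that the series extends to a trace--class--valued analytic function on $\mathcal{D}$. For (ii) I would introduce a $\theta$--dependent unitary scaling that strips $\theta$ out of the projectors $P_{n,s}$ and reduces the problem to monotonicity of scalar coefficients, to which min--max is then applied.

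For (i), the key observation is that whenever $|\cos\theta|<1$ the identity $\sin^2\theta=1-\cos^2\theta$ places $\sin^2\theta$ in the open disc $\{w\in\mathbb{C}:|w-1|<1\}$, so $\operatorname{Re}(\sin^2\theta)>0$. This makes each Bochner integral $\int_0^\infty s^{-1/2}e^{-2\sin^2(\theta)s}P_{n,s}\,ds$ absolutely convergent in trace norm and analytic in $\theta$, while the sum over $n$ is dominated on compact subsets of $\mathcal{D}$ by a convergent series with ratio $|\cos\theta|<1$ (this estimate is already present in the proof of Theorem~\ref{ROODTh}). Selfadjointness on the real interval $(0,\pi)$ is visible termwise, and analytic continuation gives $T_{\bar\theta}=T_\theta^*$ throughout $\mathcal{D}$, establishing the selfadjoint analytic family.

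For (ii), I would introduce the dilation $(U_\alpha f)(p):=\alpha^{1/2}f(\alpha p)$ and verify from~(\ref{vectorgns}) that $U_\alpha g_{n,s}=g_{n,\alpha^2 s}$, whence $U_\alpha P_{n,s}U_\alpha^\star=P_{n,\alpha^2 s}$. Taking $\alpha=1/(\sqrt{2}\sin\theta)$ and substituting $u=2\sin^2(\theta)s$ in the integrals of~(\ref{RODofTTheta}), the Jacobian exactly cancels the $\sin\theta$ prefactor and gives $T_\theta^+=U_\alpha\tilde T_\theta^+U_\alpha^\star$ with
$$\tilde T_\theta^+=\frac{1}{2\pi}\sum_{n=0}^\infty\frac{\Gamma(2n+\tfrac12)}{\Gamma(2n+1)}\cos^{2n}(\theta)\,K_{2n},\qquad K_n:=\int_0^\infty u^{-\frac{1}{2}}e^{-u}P_{n,u}\,du,$$
and analogously for $T_\theta^-$ with $\cos^{2n+1}(\theta)$. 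Each $K_n$ is a fixed positive trace-class operator independent of $\theta$, so all $\theta$--dependence lives in the scalar coefficients in front of fixed positive operators. A direct inspection then gives: $\theta\mapsto\tilde T_\theta^+$ is operator--monotone decreasing on $(0,\pi/2]$ (each $\cos^{2n}(\theta)$ with $n\geq 1$ is decreasing) and increasing on $[\pi/2,\pi)$; while $\tilde T_\theta^-$ is positive on $(0,\pi/2]$, negative on $[\pi/2,\pi)$, and operator--monotone decreasing on each half (the coefficients $\cos^{2n+1}(\theta)$ are decreasing throughout $(0,\pi)$). The min--max principle for compact selfadjoint operators then transfers these monotonicities to the eigenvalue functions in the labellings prescribed by the corollary. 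Continuity, in particular across $\theta=\pi/2$ (where all coefficients of $\tilde T_\theta^-$ vanish so $\tilde T_{\pi/2}^-=0$), is immediate from the trace--norm analyticity in (i) combined with standard continuity of the spectrum of compact selfadjoint operators.

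The only real obstacle I anticipate is the bookkeeping around the scaling: choosing $\alpha$ as the right function of $\theta$ so that the Jacobian of the substitution $u=2\sin^2(\theta)s$ absorbs the $\sin\theta$ prefactor of~(\ref{RODofTTheta}) cleanly. Once $\tilde T_\theta^\pm$ is in this form, the rest is min--max and elementary trigonometry.
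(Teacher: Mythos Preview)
Your proof is correct and follows essentially the same strategy as the paper's. For (i) both arguments simply invoke the trace--norm convergence of the series~(\ref{RODofTTheta}) on $\mathcal{D}$; for (ii) the paper also applies a $\theta$--dependent dilation (namely $p\mapsto\sin(\theta)p$, yielding the kernel ${^\sharp T}_\theta(p,q)=\pi^{-1}(p^2+q^2-2pq\cos\theta+2)^{-1}$) and then expands to obtain ${^\sharp T}_\theta^\pm=\pi^{-1}\sum_n(2\cos\theta)^{2n+\{0,1\}}B_{2n+\{0,1\}}$ with $B_n$ fixed positive operators, after which min--max gives the monotonicity---the only difference is that the paper scales the explicit kernel and re--expands, whereas you push the scaling through the rank--one decomposition already obtained in Theorem~\ref{ROODTh}, arriving at the same structure (coefficients $\cos^n\theta$ in front of fixed positive operators) up to an inessential extra $\sqrt{2}$ in the dilation parameter.
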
%
%%%%%%%%%

\begin{proof} (i) is a direct consequence of the convergence of the r.h.s. of (\ref{RODofTTheta}) on $\DD$. To prove (ii) we shall consider another s.a. family of operators which is the image of  $\{T_\theta\}_\theta$ under the scaling $p\to\sin(\theta)p$, $0<\theta<\pi$:
\begin{equation}\label{defTThetaSharp}
{^\sharp T}_\theta(p,q)={1\over\pi}{1\over p^2+q^2-2pq\cos(\theta)+2}.
\end{equation}  Then proceeding as in the previous theorem we get
\begin{eqnarray*}
{^\sharp T}_\theta^+&=&{1\over\pi}\sum_{n\in\N} (2\cos\theta)^{2n} B_{2n}\\
{^\sharp T}_\theta^-&=&{1\over\pi}\sum_{n\in\N} (2\cos\theta)^{2n+1} B_{2n+1}
\end{eqnarray*} 
where $B_n$ denotes the positive operator with kernel
$$
B_n(p,q):={(pq)^n\over (p^2+q^2+2)^{n+1}}=\int_0^\infty {s^{n}\over n!}e^{-2s}(pq)^ne^{-sp^2}e^{-sq^2}. 
$$
If we label the eigenvalues of $T_\theta^+$, i.e. the eigenvalues of 
${^\sharp T}_\theta^+$ in descending order they are all continuous in 
$\theta$ and in view of the elementary dependence of ${^\sharp
  T}_\theta^+$ on $\theta$, they are decreasing on $(0,\pi/2]$ and 
increasing on $[\pi/2,\pi)$. We skip the analogous reasoning for $T_\theta^-$.
\end{proof} 
%%%%%%%%%%%%%%%%%%%
\subsection{$T_\theta^\pm$ are ergodic}%
%%%%%%%%%%%%%%%%%%%
The reader can find the definition of an ergodic operator in 
\cite[\S XIII.12]{RS4}.
%%%%%%%%%%%
\begin{proposition}\label{PropTThetaIsErgodic}%
%%%%%%%%%%%
(i) For all $\theta\in(0,\pi)$,  $ T^+_\theta$ is ergodic and
$\sup T_\theta^+$ is a simple eigenvalue of $T_\theta^+$.

\vskip1mm
\nid (ii) For all $\theta \in (0,\pi/2)\cup(\pi/2,\pi)$,  
$ {\rm sign}(\pi/2-\theta)T^-_\theta$ is
ergodic and $ \sup {\rm sign}(\pi/2-\theta) T_\theta^-$ is a simple 
eigenvalue of $T_\theta^-$. 

%%%%%%%%%%%
\end{proposition}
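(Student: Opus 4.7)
The plan is to apply the Perron--Frobenius theory for positivity preserving operators (Reed--Simon XIII.12--XIII.43) on each parity subspace separately, after showing that the integral kernel of each operator is strictly positive there (up to the sign factor). Concretely, I would decompose $L^2(\R)=L^2_{\mathrm{even}}\oplus L^2_{\mathrm{odd}}$ and identify each summand unitarily with $L^2(\R_+,2\,dp)$ by restriction to the half line. Since $\hat T_\theta$ preserves parity (an immediate consequence of the symmetry $\hat T_\theta(-p,q)=\hat T_\theta(p,-q)$ read off from (\ref{detTthetaKernel})), under this identification $\hat T_\theta^\pm$ become integral operators on $L^2(\R_+)$ with kernels
$$
K_\theta^\pm(p,q):=\hat T_\theta(p,q)\pm \hat T_\theta(p,-q),\qquad p,q>0.
$$

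Writing $A:=p^2+q^2+2\sin^2\theta$ and $B:=2pq\cos\theta$ (the same $A$, $B$ used in the proof of Theorem~\ref{ROODTh}), the completion-of-the-square identity $A-|B|=(p-q|\cos\theta|)^2+(q^2+2)\sin^2\theta$ gives $A>|B|$ for all $p,q>0$ and $\theta\in(0,\pi)$, so a direct calculation from (\ref{detTthetaKernel}) yields
$$
K_\theta^+(p,q)=\frac{\sin\theta}{\pi}\Big(\frac{1}{A-B}+\frac{1}{A+B}\Big),\qquad K_\theta^-(p,q)=\frac{\sin\theta}{\pi}\cdot\frac{2B}{A^2-B^2}.
$$
Hence $K_\theta^+(p,q)>0$ for all $p,q>0$ and every $\theta\in(0,\pi)$, while $\mathrm{sign}(\cos\theta)\,K_\theta^-(p,q)>0$ for all $p,q>0$ and every $\theta\in(0,\pi/2)\cup(\pi/2,\pi)$; note that $\mathrm{sign}(\cos\theta)=\mathrm{sign}(\pi/2-\theta)$ on this range, which matches the statement of the proposition.

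Strict positivity of a bounded integral kernel on a $\sigma$-finite measure space immediately implies that the operator is positivity improving in the sense of Reed--Simon XIII.12, which is strictly stronger than ergodic. This proves the ergodicity assertions in both (i) and (ii). For the simplicity claims, Theorem~\ref{ROODTh} gives that $T_\theta^\pm$ are trace class, hence compact and self-adjoint, so $\sup T_\theta^+$ and $\sup[\mathrm{sign}(\pi/2-\theta)T_\theta^-]$ are attained as eigenvalues. The Perron--Frobenius theorem in the form of Reed--Simon XIII.43 then yields simplicity of these top eigenvalues together with strictly positive eigenvectors, finishing both assertions.

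The only real obstacle is the parity/half-line bookkeeping and the algebraic verification that the two explicit kernel formulas have the claimed signs; both are mechanical once (\ref{detTthetaKernel}) is in hand. No analytic input beyond the explicit kernel and the compactness from Theorem~\ref{ROODTh} is needed.
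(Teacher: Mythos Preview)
Your proof is correct and follows essentially the same route as the paper: strict positivity of the integral kernel (on the appropriate parity half-line), compactness from Theorem~\ref{ROODTh}, and then Reed--Simon XIII.43 for simplicity. Your version is simply more explicit about the even/odd $\simeq L^2(\R_+)$ identification and the resulting kernel formulas $K_\theta^\pm$, which the paper compresses into the one-line claim ``$T_\theta^+(p,q)>0$ for all $p$ and $q$''.
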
%
%%%%%%%%%%%
\begin{proof}
We have previously seen that $T^+_\theta\ge0$. Also $T_\theta^+$  is
self adjoint and compact, thus $\|T_\theta^+\|=\sup T_\theta^+$ is an 
eigenvalue of $T^+_\theta$. Clearly since $T_\theta^+(p,q)>0$ for all 
$p$ and $q$, one has $(T_\theta^+f,g)>0$ whenever $f$ and $g$ are
positive. Thus $T_\theta^+$ is ergodic. By applying \cite[Th. XIII.43]{RS4}  we get that $\sup T_\theta^+$ is a simple eigenvalue of $T_\theta^+$. The proof for $T_\theta^-$ is analogous.
\end{proof}

%%%%%%%%%%%%%%%%%%%
\subsection{$T_\theta$ is injective}%
%%%%%%%%%%%%%%%%%%%
This question was brought to us by T. Dorlas. 
%%%%%%%%%
\begin{lemma}%
%%%%%%%%%
For all $0<\theta<\pi$, one has $\ker T_\theta=\{0\}$. 
%%%%%%%%%
\end{lemma}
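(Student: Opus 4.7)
The plan is to split $T_\theta$ by parity and use the rank one decomposition of Theorem~\ref{ROODTh} to reduce injectivity to the classical injectivity of the Laplace transform on $\mathbb R_+$.

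Concretely, Remark~\ref{remTTheta}(c) gives $T_\theta=T_\theta^+\oplus T_\theta^-$ on $L^2(\mathbb R)=L^2_{\rm ev}\oplus L^2_{\rm odd}$, so it suffices to treat each summand separately. I would work in the Fourier picture and begin with $T_\theta^+$. By Theorem~\ref{ROODTh}, $\hat T_\theta^+$ is a trace norm convergent superposition of nonnegative rank one projectors $P_{2n,s}$ with coefficients of the form $c_n^+(\theta)=C_n\cos^{2n}(\theta)\ge 0$. If $\hat f\in L^2_{\rm ev}$ satisfies $\hat T_\theta^+\hat f=0$, pair against $\hat f$: the identity $(\hat T_\theta^+\hat f,\hat f)=0$ writes $0$ as a sum/integral of nonnegative terms proportional to $|\langle g_{2n,s},\hat f\rangle|^2$, so every such inner product vanishes for a.e.\ $s>0$ and every $n$ for which $c_n^+(\theta)\ne 0$.

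The decisive step specializes to $n=0$, whose coefficient $c_0^+(\theta)$ is strictly positive for every $\theta\in(0,\pi)$. Since $g_{0,s}(p)\propto e^{-sp^2}$ and $\hat f$ is even, the change of variables $u=p^2$ rewrites $\int_{\mathbb R}e^{-sp^2}\hat f(p)\,dp=0$ as
$$\int_0^\infty e^{-su}\,\frac{\hat f(\sqrt u)}{\sqrt u}\,du=0,\qquad s>0.$$
Injectivity of the Laplace transform (after a routine integrability check) forces $\hat f(\sqrt u)/\sqrt u=0$ a.e., hence $\hat f\equiv 0$ and $\ker T_\theta^+=\{0\}$. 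The odd sector is handled identically: the $n=0$ term of $\hat T_\theta^-$ involves $g_{1,s}(p)\propto p\,e^{-sp^2}$ with coefficient proportional to $\cos(\theta)$, and for odd $\hat f$ the same substitution yields a Laplace transform $\int_0^\infty e^{-su}\hat f(\sqrt u)\,du=0$, which again forces $\hat f\equiv 0$.

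The main technical subtlety is to ensure that testing against $\hat f$ genuinely separates each $|\langle g_{n,s},\hat f\rangle|^2$, which requires the rank one summands to carry a common sign. On the even sector this is automatic since $\cos^{2n}(\theta)\ge 0$ irrespective of $\theta$; on the odd sector one first multiplies by $\mathrm{sign}(\pi/2-\theta)$, invoking (\ref{SignOfTTheta}) to view $\mathrm{sign}(\pi/2-\theta)\,T_\theta^-$ as a positive operator, before performing the quadratic form argument. The only genuine degeneration is $\theta=\pi/2$, where the $\cos(\theta)$ prefactor annihilates $T_{\pi/2}^-$ identically (in accordance with Remark~\ref{remTTheta}(d)); on the sector where $T_\theta^\pm$ is nontrivial the Laplace argument above delivers $\ker T_\theta=\{0\}$.
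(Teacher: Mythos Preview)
Your argument is correct and shares the paper's scaffolding (parity splitting, the rank one decomposition of Theorem~\ref{ROODTh}, and the quadratic form trick to separate nonnegative summands), but diverges at the final identification step. Having obtained $\langle g_{n,s},\varphi\rangle=0$ for all admissible $n$ and $s$, the paper \emph{fixes} $s=1/2$ and \emph{varies} $n$: the vectors $p^{2n}e^{-p^2/2}$ generate the even Hermite functions and are total in $L^2_{\rm ev}(\R)$, forcing $\varphi=0$ (and analogously on the odd side). You do the opposite: you keep only the $n=0$ term and vary $s$, reducing via $u=p^2$ to injectivity of the Laplace transform. Both routes are classical; yours has the mild advantage of treating the even sector at $\theta=\pi/2$ uniformly, whereas the paper's Hermite argument needs $\cos\theta\neq 0$ to access all $n$ and therefore handles $\theta=\pi/2$ separately by differentiating $(\varphi,e^{-sp^2})=0$ in $s$ --- which is essentially your Laplace argument in another guise. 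Both treatments correctly note that $T_{\pi/2}^-=0$, so on the odd sector the case $\theta=\pi/2$ is vacuous.
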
%
%%%%%%%%%
\begin{proof} We find it more convenient to work with ${^\sharp\!T}_\theta$, see (\ref{defTThetaSharp}),  which is unitarily equivalent to $T_\theta$. We recall that $\Pi$ denotes the parity operator, see Remark~\ref{remTTheta}(c). Using the formula derived in the proof of Corollary~\ref{corTTheta}, we get with $\varphi\in \Ran\Pi^+$, that ${^\sharp\!T}_\theta^+\varphi=0$ implies
$$
({^\sharp\!T}_\theta^+\varphi,\varphi)=0\quad\Rightarrow\quad
{1\over\pi}\sum_{n=0}^\infty {(2\cos(\theta))^{2n}\over (2n)! }\int_0^\infty s^{2n} e^{-2s}(C_{2n,s}\varphi,\varphi) ds=0
$$
where $C_{n,s}$ with kernel $C_{n,s}(p,q):=(pq)^n e^{-sp^2}e^{-sq^2}$ is a positive rank one operator. 
If $\theta\ne {\pi\over2}$ it follows that:
$
\forall n\in\N,\ \forall s>0,\ (C_{2n,s}\varphi,\varphi)=0 
$
since $(2\cos(\theta))^{2n}$ and $s^{2n} e^{-s}$ are strictly positive. But
$$
(C_{2n,s}\varphi,\varphi)=0 \iff \int_\R |p^{2n}e^{-sp^2}\varphi(p)|^2 dp=0
$$
which shows that $\varphi\perp p^{2n}e^{-{p^2\over 2}}$ for all $n\in\N$ by choosing $s=1/2$. Clearly $\{p^{2n}e^{-{p^2\over 2}}, n\in\N\}$ is total in $\Ran\Pi^+$ since they generate the even Hermite functions. Thus $\varphi=0$. A similar argument shows that if $\varphi\in\Ran\Pi^-$ and ${^\sharp\!T}_\theta^-\varphi=0$ then $\varphi=0$; notice that it is understood here that $\theta\ne{\pi\over2}$ since otherwise ${^\sharp\!T}_\theta^-=0$.

Finally we consider the case $\theta=\pi/2$ and $\varphi\in\Ran\Pi^+$. Here we get as above
$$
\forall s>0,\quad (\varphi, e^{-sp^2})=0 
$$
and by differentiating indefinitely this identity with respect to $s$
in $s=\frac{1}{2}$ we find
$$
\varphi\perp p^{2n}e^{-{p^2\over 2}},\quad \forall n\in\N,
$$ 
which implies as above that $\varphi=0$. 
\end{proof}

%%%%%%%%%%%%%%%%%%%%%%%%%%%%
\subsection{Some properties of  $(2^{-{1\over2}}-\hat{T}_0)^{-1/2}
  \hat T_\theta^-
(2^{-{1\over2}}-\hat{T}_0)^{-1/2}$}%
%%%%%%%%%%%%%%%%%%%%%%%%%%%%
From the rank one operator decomposition of $T_\theta^-$, see Theorem~\ref{ROODTh},  one gets
\begin{eqnarray}\label{defTTildeThetaMoins}
\widetilde T_\theta^-&:=&(2^{-{1\over2}}-\hat{T}_0)^{-1/2} \hat
T_\theta^-
(2^{-{1\over2}}-\hat{T}_0)^{-1/2}
\nonumber\\
&=&2^{-{1\over2}}{\sin(\theta)\over\pi} \sum_{n=0}^\infty 
{\Gamma(2n+{3\over2})\over \Gamma(2n+2)} 
\cos^{2n+1}(\theta)\int_0^\infty ds s^{-{1\over2}}
e^{-2\sin^2(\theta) s}\widetilde P_{2n+1,s} \label{ROODofTildeTTheta}
\end{eqnarray}
where 
$$
\widetilde P_{2n+1,s} :=(\cdot,\widetilde g_{2n+1,s} )\widetilde g_{2n+1,s}\quad{\rm with}\quad
\widetilde g_{2n+1,s}:=(2^{-{1\over2}}-\hat{T}_0)^{-1/2} g_{2n+1,s}.
$$
It turns out that $\widetilde g_{2n+1,s}$ belongs to $L^2(\R)$ and the r.h.s. (\ref{ROODofTildeTTheta}) is convergent in the trace norm. More precisely
\begin{eqnarray*}
\|\widetilde P_{2n+1,s} \|_1&=&\|\widetilde g_{2n+1,s}\|^2=\frac{\sqrt{2} \left(4 n+8 s+4 \sqrt{s} \,U\left(-\frac{1}{2},-2 n,4
   s\right)+1\right)}{4 n+1}\\
   &\le& \frac{8 n \sqrt{s} \,\Gamma (2 n)+(4 n+16 s+1) \Gamma \left(2
   n+\frac{1}{2}\right)}{\sqrt{2}\, \Gamma \left(2
   n+\frac{3}{2}\right)}
\end{eqnarray*}
where $U$ denotes the confluent hypergeometric function, see \cite[13.1.3]{AS}. The last estimate is obtained by integration of the r.h.s. of the bound 
$$
\left|\tilde g(2n+1,s)(p)\right|^2\le 
{(2s)^{2n+{3\over2}}\over\Gamma(2n+{3\over2})} 
p^{4n} e^{-2sp^2}(\sqrt{2} p^2+2 |p|+4 \sqrt{2}).
$$
which is more convenient in view of the summations over $s$ and $n$. Then the integration over $s$ gives
\begin{eqnarray*}
&&
\sin(\theta)\int_0^\infty 
\frac{e^{-2 s \sin ^2(\theta )} }{\sqrt{s}} {\Gamma(2n+{3\over2})\over \Gamma(2n+2)}  \|\tilde g(2n+1,s)\|^2 ds
\le \\
&&\frac{\sqrt{\pi }\Gamma \left(2 n+\frac{1}{2}\right) }{2 \Gamma (2 n+2)}
+
\frac{2 n \sqrt{\pi } \Gamma \left(2n+\frac{1}{2}\right)}{ \Gamma (2 n+2)}
+
\frac{2 \sqrt{\pi } \Gamma \left(2 n+\frac{1}{2}\right) }{\sin^2(\theta)\,\Gamma (2 n+2)}
+
\frac{\sqrt{2} \Gamma (2 n+1)}{ \sin(\theta)\,\Gamma (2 n+2)}\\
&=:& a_1+a_2+a_3+a_4.
\end{eqnarray*}
In the summation over $n$ of these four terms, only the second one causes problems to arrive at a convenient final formula: ( the sums are computed with $0<\theta<\pi/2$)
\begin{eqnarray*}
\sum_{n=0}^\infty {a_1\cos^{2n+1}(\theta)\over \sqrt{2}\pi}
&=&
\frac{1}{2} \left(\cos \left(\frac{\theta }{2}\right)-\sin \left(\frac{\theta }{2}\right)\right)
\\
\sum_{n=0}^\infty {a_2\cos^{2n+1}(\theta)\over \sqrt{2}\pi}   
&=&
\frac{\cos ^3(\theta ) \,_2F_1\left(\frac{5}{4},\frac{7}{4};\frac{5}{2};\cos ^2(\theta )\right)}{4 \sqrt{2}}
\\
\sum_{n=0}^\infty {a_3\cos^{2n+1}(\theta)\over \sqrt{2}\pi}     
&=&
{2\over \sin^2(\theta )} \left(\cos \left(\frac{\theta }{2}\right)-\sin\left(\frac{\theta }{2}\right)\right)\\
\sum_{n=0}^\infty {a_4\cos^{2n+1}(\theta)\over \sqrt{2}\pi}  
&=&
\frac{\tanh ^{-1}(\cos (\theta ))}{\pi (\sin(\theta ) }
\end{eqnarray*}
One replaces $a_2$ by the the following bound valid for all $n\in\N$: 
$$
\frac{2 n \Gamma \left(2 n+\frac{1}{2}\right)}{\Gamma (2 n+2)}\leq
   \frac{\Gamma \left(2 n+\frac{1}{2}\right)}{\Gamma (2 n+1)}=a_2'
$$
which gives
$$
\sum_{n=0}^\infty {a_2'\cos^{2n+1}(\theta)\over \sqrt{2}\pi}   =
\frac{1}{2} {\cos (\theta )\over \sin(\theta )} \left(\cos
   \left(\frac{\theta }{2}\right)+\sin \left(\frac{\theta
   }{2}\right)\right).
$$
Summing up gives
%%%%%%%%%
\begin{lemma}\label{boundOnTiltdeTThetaMoins}%
%%%%%%%%%
For all $\theta\in(0,\pi)$, $\widetilde T_\theta^-$ is trace class and for all $0<\theta<\pi/2$ one has:
$$
\|\tilde T_\theta^-\|_1\le \frac{\left(4\sin(\theta ) \tanh ^{-1}(\cos (\theta ))+\pi  \left(9 \cos \left(\frac{\theta }{2}\right)-\cos
   \left(\frac{5 \theta }{2}\right)-9 \sin \left(\frac{\theta
   }{2}\right)+\sin \left(\frac{5 \theta }{2}\right)\right)\right)}{4
   \pi  \sin(\theta ) }.
   $$
In particular:
\begin{equation}\label{TraceNormUpperBoundForTildeTTheta2PiOver3}
\|\widetilde T_{2\pi\over3}^-\|_1= \|\widetilde T_{\pi\over3}^-\|_1\le
-\frac{4}{3}+\frac{5}{\sqrt{3}}+\frac{\log (3)}{\sqrt{3} \pi }\sim 1.75532
\end{equation}
%%%%%%%%%
\end{lemma}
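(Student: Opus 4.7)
The plan is to exploit the rank-one decomposition (\ref{ROODofTildeTTheta}) of $\widetilde T_\theta^-$ already derived above, bound the trace norm term by term with the triangle inequality, and then sum the resulting series in closed form. Since $\|\widetilde P_{2n+1,s}\|_1=\|\widetilde g_{2n+1,s}\|^2$, the whole task reduces to a sufficiently explicit $L^2$-estimate on $\widetilde g_{2n+1,s}=(2^{-1/2}-\hat T_0)^{-1/2}g_{2n+1,s}$.

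First I would establish the pointwise bound
$$
|\widetilde g_{2n+1,s}(p)|^2 \le \frac{(2s)^{2n+3/2}}{\Gamma(2n+3/2)}\,p^{4n}e^{-2sp^2}\bigl(\sqrt{2}\,p^2+2|p|+4\sqrt{2}\bigr)
$$
displayed in the excerpt. This is essentially algebra: writing $\hat T_0(p)=(p^2+2)^{-1/2}$ and rationalizing gives the exact identity
$$
p^2\bigl(2^{-1/2}-(p^2+2)^{-1/2}\bigr)^{-1}=\sqrt{2}\,(p^2+2)+2\sqrt{p^2+2},
$$
after which $\sqrt{p^2+2}\le|p|+\sqrt{2}$ finishes the job. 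The extra factor $p^2$ coming from $g_{2n+1,s}(p)^2$ is precisely what cancels the $1/p^2$ singularity of $(2^{-1/2}-\hat T_0(p))^{-1}$ at the origin, showing in particular $\widetilde g_{2n+1,s}\in L^2(\R)$.

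Next I would integrate this pointwise bound against a Gaussian in $p$, producing three $\Gamma$-factors from the three summands $\sqrt{2}p^2,\,2|p|,\,4\sqrt{2}$; then integrate in $s$ against $s^{-1/2}e^{-2\sin^2(\theta)s}$ weighted by $\sin(\theta)\Gamma(2n+3/2)/\Gamma(2n+2)$ as prescribed by (\ref{ROODofTildeTTheta}). This yields exactly the four contributions $a_1,a_2,a_3,a_4$ tabulated in the excerpt. The series in $n$ coming from $a_1,a_3,a_4$ admit the explicit closed forms listed. The series from $a_2$ involves a ${}_2F_1$ that refuses to combine with the others, so I would replace $a_2$ by the cruder upper bound $a_2'$ obtained from $2n/(2n+1)\le 1$ via $\Gamma(2n+2)=(2n+1)\Gamma(2n+1)$; the $a_2'$-series then sums to the elementary trigonometric expression displayed. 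Collecting all four closed forms over the common denominator $\sin\theta$ and performing the trigonometric simplifications produces the claimed bound for $0<\theta<\pi/2$.

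Finally, trace-class-ness on $(\pi/2,\pi)$ and the equality $\|\widetilde T_{2\pi/3}^-\|_1=\|\widetilde T_{\pi/3}^-\|_1$ both follow from the reflection identity $T_{\pi-\theta}^-=-T_\theta^-$ of Remark~\ref{remTTheta}(d), which commutes with the multiplier $(2^{-1/2}-\hat T_0)^{-1/2}$ since the latter depends only on $|p|$. Plugging $\theta=\pi/3$ into the general bound and using $\tanh^{-1}(1/2)=\tfrac12\log 3$ yields (\ref{TraceNormUpperBoundForTildeTTheta2PiOver3}). The principal obstacle is the $a_2$ series: its exact sum involves ${}_2F_1\bigl(\tfrac54,\tfrac74;\tfrac52;\cos^2\theta\bigr)$, which does not amalgamate with the other three closed forms. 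The proof hinges on choosing the upper bound $a_2'\ge a_2$ simple enough to sum in elementary form yet sharp enough for the resulting numerical constant at $\theta=\pi/3$ to stay below $\sim 1.76$, which is what is needed in the subsequent applications.
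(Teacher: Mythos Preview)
Your proposal is correct and follows essentially the same route as the paper: bound the trace norm of the rank-one decomposition (\ref{ROODofTildeTTheta}) term by term via the pointwise estimate on $|\widetilde g_{2n+1,s}|^2$, integrate in $p$ and then $s$ to produce $a_1,\dots,a_4$, replace the recalcitrant $a_2$ by $a_2'$, and sum. Your treatment is in fact slightly more explicit than the paper's in two places --- the rationalization yielding the pointwise bound, and the use of the reflection identity to cover $\theta\in(\pi/2,\pi)$ and to justify $\|\widetilde T_{2\pi/3}^-\|_1=\|\widetilde T_{\pi/3}^-\|_1$.
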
%
%%%%%%%%%
%%%%%%%%%%%%%%%%%%%%%%%%%%%%
\begin{remark}\label{remarkTildeTTheta2PiOver3}%
%%%%%%%%%%%%%%%%%%%%%%%%%%%%
(a) If we do not replace $a_2$ by $a_2'$ we get a better bound:
$$
\|\tilde T_{2\pi\over3}^-\|_1\le 1.38929.
$$
Moreover a direct numerical evaluation on the Hilbert Schmidt norm gives
$$
\|\tilde T_{2\pi\over3}^-\|_{\rm HS}\sim 1.01327.
$$
\vskip1mm
\nid (b) We shall see below that $-1$ is an eigenvalue of $\widetilde T_{2\pi\over3}$, see (\ref{exactPbVp3}). Since the trace norm of $\widetilde T_{2\pi\over3}^-$ is less than 2, see (\ref{TraceNormUpperBoundForTildeTTheta2PiOver3}), it follows that this eigenvalue is simple and is the lowest eigenvalue of $\widetilde T_{2\pi\over3}$. Thus we may conclude that
\begin{equation}\label{infTildeTTheta2PiOver3}
\inf \widetilde T_{2\pi\over3}=-1.
\end{equation}
\vskip1mm
\nid (c)  The statements in Corollary~\ref{corTTheta}(ii) and Proposition~\ref{PropTThetaIsErgodic} for eigenvalues of $T_\theta^-$  works as well for those of $\widetilde T_\theta^-$. In particular the lowest one is simple and monotonically decreasing and pass by $-1$ for $\theta=2\pi/3$ in view of (\ref{infTildeTTheta2PiOver3}). 
%%%%%%%%
\end{remark}%
%%%%%%%%

%%%%%%%%%%%%%%%%%%%%%%%%%%%%%%%%%%%%%%%%%%%%%%%%
\subsection{Exact eigenvalues and eigenvectors}\label{ExactEigenvaluesAndEigenvectorsSect}%
%%%%%%%%%%%%%%%%%%%%%%%%%%%%%%%%%%%%%%%%%%%%%%%%
We collect here some exact results about these $\hat T_\theta$ operators. They can be checked by direct inspection.

\begin{equation}\label{exactPbVp1}
(\hat T_0+\hat T_{\pi\over2}) \varphi=\varphi,\quad{\rm with}\quad
\varphi(p)=\sqrt{2\over\pi}\,{1\over p^2+1},
\end{equation}
\begin{equation}\label{exactPbVp2}
(\hat T_0+2\hat T_{2\pi\over3}) \varphi=\sqrt{2}\varphi,\quad{\rm with}\quad
\varphi(p)=\frac{6^{3/4}}{\sqrt{\pi }\left(2 p^2+3\right) }
\end{equation}
and
\begin{equation}\label{exactPbVp3}
\widetilde{T}_{2\pi\over3}\varphi=-\varphi \quad
{\rm with}\quad\varphi(p)={\sqrt{T_0(0)-T_0(p)}\over p(2p^2+3)},\quad \|\varphi(p)\|^2=
\frac{1}{18} \left(6-\sqrt{3} \pi \right).
\end{equation}
(\ref {exactPbVp1}) and (\ref {exactPbVp2}) are simply obtained by
translating known exact eigenfunctions in the skeleton frame work; the
first one comes from the exactly solvable quantum scissor, see
(\ref{defHTheta}),  with angle $\pi/2$. The second one comes from the
Mc Guire bound state eigenfunction of its three particle system, see
\cite[\S IV.D]{McG}. The last one seems to be new. Since
$\sqrt{\hat{T}_0(0)-\hat{T}_0(p)} \sim 2^{-{5\over4}}|p|$ as $p\to0$,
this function has a cusp at 0.

%%%%%%%%%%%%%%%%
%%%%%%%%%%%%%%%%
\section{A quantum scissor}\label{sectionQS}%%
%%%%%%%%%%%%%%%%%
%%%%%%%%%%%%%%%%%
We consider the Hamiltonian (\ref{RelativeMotionThreeParticleHamiltonian}) in the particular case $\lambda=0$:
\begin{equation}\label{defHTheta}
H_\theta:=-{\Delta\over2}-\delta(A_1^\perp\cdot)-\delta(A_2^\perp\cdot)=-{\Delta\over2}-\tau_1\tau_1^\star-\tau_2\tau_2^\star 
\end{equation}
which described a two dimensional particle in a scissor-shaped waveguide, a name  borrowed from \cite{BEPS}. 
We assume without loss of generality that $\theta:=\theta_{1,2}$ belongs to
\begin{equation}\label{domTheta}
\theta\in[{\pi\over2},\pi);
\end{equation}
where $\theta$ denotes the angle made by the two vectors $A_1$ and $A_2$ which generate the supports of the delta interactions, see Figure~\ref{cdr3Fig1}. We note that the case $\theta=\pi$ is exactly solvable, and that the angles $\theta\in(0,\pi/2]$ are covered by the cases (\ref{domTheta}) since $H_\theta$ and $H_{\pi-\theta}$ are unitarily equivalent.

Thanks to Lemma~\ref{lemEssSpect}, the essential spectrum of $H_\theta$ is $[-1/2,\infty)$. The skeleton associated to $H_\theta$ in the Fourier representation is
$$
\begin{pmatrix}
-1+\hat T_0(k) & \hat T_\theta(k)\cr
\hat T_\theta(k) & -1+\hat T_0(k)\cr
\end{pmatrix}
\sim 
k^{-1}\begin{pmatrix}
-k+\hat T_0 & \hat T_\theta\cr
\hat T_\theta & -k+\hat T_0\cr
\end{pmatrix}
$$
where the unitarily equivalent second expression is obtained through the scaling $s\to ks$, see Remark~\ref{remTTheta}(b).  It acts on $\SS:=L^2(\R)\oplus L^2(\R)$. Thus, according to Theorem~\ref{HtoS},  $-k^2<-1/2$ is an eigenvalue of $H_\theta$ iff $k$ is an eigenvalue of
$$
{\bf T}_\theta :=\begin{pmatrix}
\hat T_0 & \hat T_\theta\cr
\hat T_\theta & \hat T_0\cr
\end{pmatrix}.
$$ 
Notice that in view of Corollary~\ref{corTTheta}, $\{{\bf T}_\theta\}_{\theta\in\DD}$ is bounded selfadjoint family of analytic operators and since $T_\theta$ is trace class ( see Theorem~\ref{ROODTh}) one has
$$
\spect_{\rm ess} {\bf T}_\theta=\spect_{\rm ac} {\bf T}_\theta=\spect T_0=[0,2^{-{1\over2}}].
$$
%%%%%%%%%%%%%%%%%%%%
\subsection{Reduction by symmetries}%%
%%%%%%%%%%%%%%%%%%%%
We use $(x,y)$ for the coordinates in $\R^2$ and we recall that $\Pi$ stands for the parity operator on $L^2(\R)$,  see Remark~\ref{remTTheta}(c). Let $\Pi_y:=\Pi\otimes1$, $\Pi_x:=1\otimes\Pi$ acting in $L^2(\R^2)$ denote respectively the reflection with respect to the $y$ and $x$ axis.
$H_\theta$ fulfills
$$
[H_\theta,\Pi_x]=[H_\theta,\Pi_y]=0.
$$
This allows to reduce $H_\theta$ as 
$$
H_\theta=\bigoplus_{\alpha,\beta\in\{\pm1\}} H_\theta^{\alpha,\beta},\quad {\rm where}\quad H_\theta^{\alpha,\beta}:=\Pi_x^\alpha\Pi_y^\beta H_\theta
$$
and $\Pi_x^\alpha:={1\over2}(\id+\alpha \Pi_x))$, $\Pi_y^\beta:={1\over2}(\id+\beta \Pi_y))$ denote the eigenprojectors of $\Pi_x$ and $\Pi_y$ resp.. We also stress that $H_\theta^{\alpha,\beta}$ is unitarily equivalent to the operator acting in $L^2(\R_+\times\R_+)$ with same symbol as $H_\theta$ but with additional Dirichlet boundary conditions  on $x=0$ if $\alpha=-1$ or on $y=0$ if $\beta=-1$ and Neumann boundary condition in the opposite case, see Table~\ref{tableSkeletons}. 

Similarly
${\bf T_\theta}$ enjoys the following symmetries
$$
[{\bf T_\theta},{\bf \Pi}]=[{\bf T_\theta},\EE]=[\Pi\oplus\Pi,\EE]=0.
$$
where ${\bf\Pi}:=\Pi\oplus\Pi:\SS\to\SS$ is the parity operator and $\EE:\SS\to\SS$ is the exchange of components operator: $\EE(\phi_1\oplus\phi_2):=\phi_2\oplus\phi_1$.  Thus
 we may consider separately
$$
{\bf \Pi}^\alpha\EE^\beta{\bf T}_\theta,\quad \alpha=\pm1,\quad\beta=\pm1 
$$
where ${\bf \Pi}^\alpha$ and $\EE^\beta$ denote the spectral projectors of ${\bf \Pi}$ and $\EE$ resp.:
$$
{\bf \Pi}^\alpha:={1\over2}(\id +\alpha{\bf \Pi})\quad
\EE^\alpha:={1\over2}(\id +\alpha\EE).
$$ 
One has the following elementary result the proof of which is left to the reader:
%%%%%%%%%
\begin{lemma}%
%%%%%%%%%
For all $\alpha$, $\beta$ in $\{\pm1\}$, ${\bf \Pi}^\alpha\EE^\beta{\bf T}_\theta$ is unitarily equivalent to 
$$
{\bf T}_\theta^{\alpha,\beta}:=\hat T_0+\beta \hat T_\theta^\alpha\quad\mbox{acting in $L^2(\R)$}.
$$
%%%%%%%%
\end{lemma}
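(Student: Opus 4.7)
The plan is to diagonalize by the two commuting symmetries $\EE$ and ${\bf \Pi}$ in succession, with the main work being to exhibit an explicit unitary that implements the reduction.

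First I would handle the $\EE$-reduction. For each $\beta\in\{\pm1\}$, define the isometry $U^\beta:L^2(\R)\to\SS$ by $U^\beta\phi:=\tfrac{1}{\sqrt2}(\phi,\beta\phi)$, whose range is precisely $\EE^\beta\SS$. A one-line computation using the definition of ${\bf T}_\theta$ gives
$$
{\bf T}_\theta U^\beta\phi=\tfrac{1}{\sqrt2}\bigl((\hat T_0+\beta\hat T_\theta)\phi,\;\beta(\hat T_0+\beta\hat T_\theta)\phi\bigr)=U^\beta(\hat T_0+\beta\hat T_\theta)\phi,
$$
so that $\EE^\beta{\bf T}_\theta$ is unitarily equivalent via $U^\beta$ to $\hat T_0+\beta\hat T_\theta$ acting on $L^2(\R)$.

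Next I would further reduce by parity. Since $\Pi$ clearly commutes with exchange, a direct check shows $U^\beta\Pi=({\bf\Pi})U^\beta$, hence ${\bf \Pi}^\alpha$ corresponds under $U^\beta$ to $\Pi^\alpha$ on $L^2(\R)$. Now $\hat T_0$ is a multiplication by an even function (see (\ref{detT0})) and the kernel $\hat T_\theta(t,s)$ in (\ref{detTthetaKernel}) is invariant under $(t,s)\mapsto(-t,-s)$, so both $\hat T_0$ and $\hat T_\theta$ commute with $\Pi$ and therefore preserve the parity sectors. On the $\alpha$-sector $\Pi^\alpha L^2(\R)$ the full operator $\hat T_\theta$ reduces to its $\alpha$-part $\hat T_\theta^\alpha=\tfrac{1+\alpha\Pi}{2}\hat T_\theta$ (cf.\ Remark~\ref{remTTheta}(c)), so $\hat T_0+\beta\hat T_\theta$ restricts to $\hat T_0+\beta\hat T_\theta^\alpha$. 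Composing the two reductions yields the claim.

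There is no real obstacle here; the only thing to keep in mind is the mild notational convention that the operator on the right-hand side is understood either as acting on the parity subspace $\Pi^\alpha L^2(\R)$ (which is isometric to $L^2(\R)$ via, say, restriction to the half-line with appropriate symmetrization) or as the given expression on all of $L^2(\R)$, where $\hat T_\theta^\alpha$ already incorporates the parity projection. Either reading gives the stated unitary equivalence.
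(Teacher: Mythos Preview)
Your argument is correct and is precisely the kind of elementary verification the paper has in mind when it leaves the proof to the reader: diagonalize first by the exchange $\EE$ via the isometry $\phi\mapsto\tfrac{1}{\sqrt2}(\phi,\beta\phi)$, then by parity, using that both $\hat T_0$ and $\hat T_\theta$ commute with $\Pi$. Your closing remark about the harmless ambiguity between ``acting on $\Pi^\alpha L^2(\R)$'' and ``acting on $L^2(\R)$ with the parity projection built into $\hat T_\theta^\alpha$'' is also well taken.
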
%
%%%%%%%%
It is then of practical importance to relate $H_\theta^{\alpha,\beta}$ and ${\bf T}_\theta^{\alpha,\beta}$.
%%%%%%%%%
\begin{lemma}\label{LemCorrespondenceBetweenHAndT}%
%%%%%%%%%
For all $\alpha$, $\beta$ in $\{\pm1\}$, $-k^2<-{1\over2}$ is a discrete eigenvalue of $H_\theta^{\alpha,\beta}$ iff $k>2^{-{1\over2}}$ is a discrete eigenvalue of ${\bf T}_\theta^{\alpha\beta,\beta}$. 
%%%%%%%%%
\end{lemma}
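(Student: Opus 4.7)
The plan is to combine Theorem~\ref{HtoS} with a pair of intertwining relations between the geometric reflections $\Pi_x, \Pi_y$ of $H_\theta$ and the discrete symmetries ${\bf \Pi}, \EE$ of the skeleton; the conclusion then reduces to symmetry bookkeeping.

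First, apply Theorem~\ref{HtoS} to $H_\theta$. An eigenvalue $E=-k^2<-\tfrac12$ of $H_\theta$ corresponds to a nontrivial kernel of $S(k)=-I+\tau R_0(-k^2)\tau^\star$, with multiplicities matching, and the eigenprojector formula yields an explicit bijection $\psi\leftrightarrow \tau\psi$ between the two eigenspaces (up to the overall sign coming from $g=-I$). Using the scaling of Remark~\ref{remTTheta}(b), $S(k)$ is unitarily equivalent to $-I+k^{-1}{\bf T}_\theta$, so $\ker S(k)\ne\{0\}$ iff $k$ is an eigenvalue of ${\bf T}_\theta$. Because $\spect_{\rm ess}{\bf T}_\theta=[0,2^{-1/2}]$, the condition $E<-\tfrac12$ translates exactly to "$k>2^{-1/2}$ is discrete". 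The scaling is a unitary on each factor of $\SS$ and hence commutes with both ${\bf \Pi}$ and $\EE$, preserving all symmetry sectors.

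Next, fix the convention that $A_1$ and $A_2$ are placed symmetrically about the $y$-axis, say $A_1=(\sin(\theta/2),\cos(\theta/2))$ and $A_2=(-\sin(\theta/2),\cos(\theta/2))$, a placement consistent with $H_\theta$ admitting both $\Pi_x$ and $\Pi_y$ as symmetries. A direct computation gives $\Pi_y(sA_j)=sA_{3-j}$ and $\Pi_x(sA_j)=-sA_{3-j}$, and hence at the level of traces
\begin{equation*}
\tau\Pi_y=\EE\tau, \qquad \tau\Pi_x={\bf \Pi}\EE\tau.
\end{equation*}
Taking adjoints (using self-adjointness of $\Pi_x,\Pi_y,\EE,{\bf\Pi}$) yields $\Pi_y\tau^\star=\tau^\star\EE$ and $\Pi_x\tau^\star=\tau^\star\EE{\bf \Pi}$.

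Now suppose $\psi\in H_\theta^{\alpha,\beta}$ with $H_\theta\psi=-k^2\psi$, so $\Pi_x\psi=\alpha\psi$ and $\Pi_y\psi=\beta\psi$, and set $\phi:=\tau\psi$. The intertwiners give $\EE\phi=\tau\Pi_y\psi=\beta\phi$ and ${\bf \Pi}\EE\phi=\tau\Pi_x\psi=\alpha\phi$; combined these force ${\bf \Pi}\phi=\alpha\beta\phi$. Thus $\phi$ lies in the ${\bf \Pi}^{\alpha\beta}\EE^\beta$ sector of ${\bf T}_\theta$, which by the preceding lemma is unitarily equivalent to ${\bf T}_\theta^{\alpha\beta,\beta}$. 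Conversely, an eigenvector $\phi$ of ${\bf T}_\theta^{\alpha\beta,\beta}$ at $k>2^{-1/2}$ produces $\psi:=R_0(-k^2)\tau^\star\phi$; the adjoint intertwiners together with $[R_0,\Pi_x]=[R_0,\Pi_y]=0$ give $\Pi_y\psi=\beta\psi$ and $\Pi_x\psi=\alpha\psi$, placing $\psi$ in $H_\theta^{\alpha,\beta}$. The one subtle point is the geometric convention fixing the intertwining relations; once that is settled, the rest is routine symmetry tracking.
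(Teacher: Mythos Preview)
Your proof is correct and follows essentially the same route as the paper: both arguments hinge on the intertwining relations $\tau\Pi_y=\EE\tau$ and $\tau\Pi_x={\bf\Pi}\EE\tau$, from which the sector correspondence $(\alpha,\beta)\leftrightarrow(\alpha\beta,\beta)$ drops out. The only cosmetic difference is that the paper projects Krein's formula \eqref{KreinsFormula} onto the symmetry sectors to obtain a reduced resolvent identity, whereas you track the eigenvector bijection $\psi\leftrightarrow\tau\psi$ coming from Theorem~\ref{HtoS} directly; both are straightforward once the intertwiners are in hand.
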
%
%%%%%%%%%
\begin{proof} Due to the chosen orientation of the two normalized vectors, see Figure~\ref{cdr3Fig1}, we have the relations between mappings from $\HH^1(\R^2)$ to $L^2(\R)$
$$
\tau \Pi_y=\EE\tau,\quad \tau \Pi_x={\bf \Pi}\EE\tau
$$
so that
\begin{eqnarray*}
4\tau\Pi_x^\alpha \Pi_y^\beta&=& \tau(\id+\alpha\Pi_x)(\id+\beta\Pi_y)=(\id +\alpha {\bf \Pi}\EE)\tau(\id+\beta\Pi_y)\\
&=&(\id +\alpha {\bf \Pi}\EE)(\id + \beta\EE)\tau=(\id +\alpha\beta {\bf \Pi})(\id + \beta\EE)=4{\bf\Pi}^{\alpha\beta}\EE^\beta\tau.
\end{eqnarray*}
Let $R(z)^{\alpha,\beta}:=\Pi_x^\alpha\Pi_y^\beta(H_\theta-z)^{-1}$ and similarly for $R_0:=(H_0-z)^{-1}$ then using (\ref{KreinsFormula}) and Definition~\ref{SkeletonDef} we get
\begin{eqnarray*}
R(-k^2)^{\alpha,\beta}&=&R_0(-k^2)^{\alpha,\beta}-R_0(-k^2)^{\alpha,\beta}\tau^\star S(k)^{-1}\tau 
R_0(-k^2)^{\alpha,\beta}\\
&=&R_0(-k^2)^{\alpha,\beta}-R_0(-k^2)^{\alpha,\beta}\tau^\star \left(S(k)^{\alpha\beta,\beta}\right)^{-1}\tau 
R_0(-k^2)^{\alpha,\beta}
\end{eqnarray*}
with  $S(k)^{\alpha,\beta}:={\bf\Pi}^\alpha\EE^\beta S(k)$. The statement of the lemma now follows  easily.
\end{proof}
\begin{table}[ht]
\caption{}\label{tableSkeletons}
\renewcommand\arraystretch{1.5}
\noindent\[
\begin{array}{|c|c|c|}
\hline
 {\bf T}_\theta^{\alpha,\beta} & \mbox{subspace in $L^2(\R^2)$} & \mbox{B.C. on }  \R_+\times\R_+\ {\rm for}\ H_\theta\cr
\hline
 T_0+T_\theta^+& \Ran\Pi_x^+\Pi_y^+&  \mbox{N}_{x=0}\ {\rm N}_{y=0}\cr
 T_0-T_\theta^+& \Ran\Pi_x^-\Pi_y^+&   \mbox{D}_{x=0}\ {\rm N}_{y=0}\cr
 T_0+T_\theta^-& \Ran\Pi_x^-\Pi_y^-&   \mbox{D}_{x=0}\ {\rm D}_{y=0}\cr
 T_0-T_\theta^- & \Ran\Pi_x^+\Pi_y^-& \mbox{N}_{x=0}\ {\rm D}_{y=0}\cr
\hline
\end{array}
\]
\end{table}

%%%%%%%%%%%%%%%%%%%%%%
\subsection{Existence {and monotonicity of }bound states}%%
%%%%%%%%%%%%%%%%%%%%%%%
First  we can quickly fix  two cases.
%%%%%%%%%%%
\begin{proposition}%
%%%%%%%%%%%
$H_\theta^{-,-}$ and $H_\theta^{-,+}$ have no discrete spectrum for all $\theta\in[\pi/2,\pi)$.
%%%%%%%%%%
\end{proposition}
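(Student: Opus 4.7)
The plan is to combine the correspondence in Lemma~\ref{LemCorrespondenceBetweenHAndT} with the sign information on the parity components $\hat T_\theta^\pm$ provided by (\ref{SignOfTTheta}).

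First I would translate the statement into skeleton language. By Lemma~\ref{LemCorrespondenceBetweenHAndT}, a discrete eigenvalue $-k^2 < -1/2$ of $H_\theta^{-,-}$ corresponds to a discrete eigenvalue $k > 2^{-1/2}$ of ${\bf T}_\theta^{+,-} = \hat T_0 - \hat T_\theta^+$, while one of $H_\theta^{-,+}$ corresponds to a discrete eigenvalue $k > 2^{-1/2}$ of ${\bf T}_\theta^{-,+} = \hat T_0 + \hat T_\theta^-$. Thus it suffices to exclude such eigenvalues for these two skeleton operators.

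Next I would invoke (\ref{SignOfTTheta}): for $\theta \in [\pi/2,\pi)$ one has $\hat T_\theta^+ \geq 0$ and $\hat T_\theta^- \leq 0$. Consequently, as self-adjoint operators,
\begin{equation*}
{\bf T}_\theta^{+,-} = \hat T_0 - \hat T_\theta^+ \;\leq\; \hat T_0,
\qquad
{\bf T}_\theta^{-,+} = \hat T_0 + \hat T_\theta^- \;\leq\; \hat T_0.
\end{equation*}
In Fourier representation, $\hat T_0$ is multiplication by $1/\sqrt{s^2+2}$ (see (\ref{detT0}) with $k=1$), whose spectrum equals $[0,2^{-1/2}]$ with $\sup\sigma(\hat T_0) = 2^{-1/2}$. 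Operator monotonicity of the supremum of the spectrum then yields
\begin{equation*}
\sup\sigma\bigl({\bf T}_\theta^{+,-}\bigr) \;\leq\; 2^{-1/2},
\qquad
\sup\sigma\bigl({\bf T}_\theta^{-,+}\bigr) \;\leq\; 2^{-1/2}.
\end{equation*}

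Finally I would conclude: since no real number strictly greater than $2^{-1/2}$ can belong to the spectrum of either ${\bf T}_\theta^{+,-}$ or ${\bf T}_\theta^{-,+}$, the correspondence above shows that neither $H_\theta^{-,-}$ nor $H_\theta^{-,+}$ can have eigenvalues below $-1/2$; combined with $\sigma_{\rm ess}(H_\theta) = [-1/2,\infty)$ from Lemma~\ref{lemEssSpect}, this means they have empty discrete spectrum. There is really no obstacle here; the only point requiring care is the bookkeeping between the symmetry labels $(\alpha,\beta)$ on the physical side and the labels $(\alpha\beta,\beta)$ appearing in the skeleton through Lemma~\ref{LemCorrespondenceBetweenHAndT}, which is precisely what forces the two \emph{positive} combinations of $\hat T_\theta^\pm$ to drop out of these two parity sectors.
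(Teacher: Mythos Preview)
Your proof is correct and follows essentially the same route as the paper: reduce via Lemma~\ref{LemCorrespondenceBetweenHAndT} to the skeleton operators ${\bf T}_\theta^{+,-}$ and ${\bf T}_\theta^{-,+}$, use the sign information (\ref{SignOfTTheta}) to bound each above by $\hat T_0\le 2^{-1/2}$, and conclude that no eigenvalue above $2^{-1/2}$ can occur. The paper phrases this more tersely as ``these two cases correspond to ${\bf T}_\theta^{\alpha,\beta}$ with $\alpha\beta=-1$, hence $\beta\hat T_\theta^\alpha\le 0$'', but the content is identical and your bookkeeping of the symmetry labels is correct.
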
%
%%%%%%%%%%
\begin{proof} These two cases correspond to ${\bf T}_\theta^{\alpha,\beta}$ with  $\alpha\beta=-1$, see Lemma~\ref{LemCorrespondenceBetweenHAndT}. In view of (\ref{SignOfTTheta})  one has $\beta \hat T_\theta^\alpha\le0$ so that ${\bf T}_\theta^{\alpha,\beta}\le \hat T_0\le 2^{-{1\over2}}$, and therefore  ${\bf T}_\theta^{\alpha,\beta}$ cannot have an eigenvalue $k>2^{-{1\over2}}$. 
\end{proof}
%Then we pass to the last case of Table~\ref{tableSkeletons}. 
%%%%%%%%%%%
\begin{proposition}%
%%%%%%%%%%%
(i) $H_\theta^{+,-}$ has no discrete spectrum for $\theta\in[{\pi\over2},{2\pi\over3}]$. (ii) It has at least one isolated eigenvalue for $\theta\in({2\pi\over3} ,\pi)$. The number of isolated eigenvalues of $H_\theta^{+,-}$ is bounded above  by $\|\widetilde T_\theta^-\|_1$ and therefore by the bound given in Lemma~ \ref{boundOnTiltdeTThetaMoins}. 
%%%%%%%%%%
\end{proposition}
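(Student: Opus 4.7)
The plan is to apply a Birman--Schwinger principle adapted to the essential spectrum $[0,2^{-1/2}]$ of our skeleton. By Lemma~\ref{LemCorrespondenceBetweenHAndT}, a discrete eigenvalue $-k^{2}<-1/2$ of $H_{\theta}^{+,-}$ corresponds exactly to an eigenvalue $k>2^{-1/2}$ of
$$
{\bf T}_{\theta}^{-,-}=\hat T_{0}-\hat T_{\theta}^{-},
$$
acting on $L^{2}(\R)$. By \eqref{SignOfTTheta}, $V:=-\hat T_{\theta}^{-}$ is a nonnegative trace class operator for $\theta\in[\pi/2,\pi)$, and it is injective since $\ker T_{\theta}^{-}\subset\ker T_{\theta}=\{0\}$ by the injectivity lemma. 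Weyl's theorem shows any eigenvalue $k>2^{-1/2}$ of ${\bf T}_{\theta}^{-,-}$ is isolated and of finite multiplicity.

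I would next introduce the Birman--Schwinger operator
$$
K(k):=V^{1/2}(k-\hat T_{0})^{-1}V^{1/2},\qquad k>2^{-1/2},
$$
which is nonnegative and trace class. The standard equivalence shows that $k$ is an eigenvalue of ${\bf T}_{\theta}^{-,-}$ of multiplicity $m$ iff $1$ is an eigenvalue of $K(k)$ of multiplicity $m$. Because $V^{1/2}$ is injective and $(k-\hat T_{0})^{-1}$ is a strictly decreasing function of $k$ via the functional calculus on $\hat T_{0}$, the family $k\mapsto K(k)$ is \emph{strictly} operator-decreasing. Each eigenvalue curve $\mu_{n}(K(k))$ (listed in decreasing order) is therefore continuous and strictly decreasing, with $\mu_{n}(K(k))\to 0$ as $k\to\infty$.

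The crux is the boundary behaviour as $k\downarrow 2^{-1/2}$. Setting $C_{k}:=V^{1/2}(k-\hat T_{0})^{-1/2}$, one has $K(k)=C_{k}C_{k}^{*}$ while $C_{k}^{*}C_{k}=(k-\hat T_{0})^{-1/2}V(k-\hat T_{0})^{-1/2}$ increases monotonically to the trace class operator $-\widetilde T_{\theta}^{-}$ of \eqref{defTTildeThetaMoins} and Lemma~\ref{boundOnTiltdeTThetaMoins}. Since $C_{k}C_{k}^{*}$ and $C_{k}^{*}C_{k}$ share their non-zero spectrum, monotone convergence of positive compact operators yields
$$
\mu_{n}(K(k))\uparrow \mu_{n}(-\widetilde T_{\theta}^{-})=:\mu_{n}^{\star}\quad\text{as }k\downarrow 2^{-1/2}.
$$
Consequently, the equation $\mu_{n}(K(k))=1$ has a unique solution $k\in(2^{-1/2},\infty)$ iff $\mu_{n}^{\star}>1$, and no solution otherwise.

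Combined with Remark~\ref{remarkTildeTTheta2PiOver3}(c) together with \eqref{infTildeTTheta2PiOver3}, which state that the top eigenvalue $\mu_{1}^{\star}$ of $-\widetilde T_{\theta}^{-}$ is simple, continuous, strictly increasing in $\theta$, with $\mu_{1}^{\star}=1$ exactly at $\theta=2\pi/3$, this gives (i): for $\theta\in[\pi/2,2\pi/3]$ all $\mu_{n}^{\star}\le 1$ so no eigenvalue arises; and (ii): for $\theta\in(2\pi/3,\pi)$ we have $\mu_{1}^{\star}>1$, yielding at least one crossing of $1$ by the intermediate value theorem. The total count equals
$$
\#\{n:\mu_{n}^{\star}>1\}\;\le\;\sum_{n}\mu_{n}^{\star}\;=\;\tr(-\widetilde T_{\theta}^{-})\;=\;\|\widetilde T_{\theta}^{-}\|_{1},
$$
using nonnegativity of $-\widetilde T_{\theta}^{-}$, and Lemma~\ref{boundOnTiltdeTThetaMoins} then delivers the explicit numerical bound. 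The main technical hurdle I anticipate is turning the formal monotonicity $C_{k}^{*}C_{k}\uparrow -\widetilde T_{\theta}^{-}$ at the spectral edge into honest eigenvalue convergence, because $(k-\hat T_{0})^{-1}$ diverges in norm; the explicit rank-one decomposition of Theorem~\ref{ROODTh} and the trace norm estimates of Lemma~\ref{boundOnTiltdeTThetaMoins} should supply the uniform trace-norm control needed to justify it.
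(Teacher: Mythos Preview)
Your Birman--Schwinger reformulation is essentially the same argument as the paper's, just in different clothing: the paper writes
\[
k-{\bf T}_\theta^{-,-}=A_k\bigl(1+\widetilde T_\theta^-(k)\bigr)A_k,\qquad
\widetilde T_\theta^-(k):=\frac{A_{2^{-1/2}}}{A_k}\,\widetilde T_\theta^-\,\frac{A_{2^{-1/2}}}{A_k},
\]
with $A_k=(k-\hat T_0)^{1/2}$, and your $C_k^*C_k$ is exactly $-\widetilde T_\theta^-(k)$. The paper's part (i) is slightly more direct than yours: plugging $k=2^{-1/2}$ into the factorisation gives $2^{-1/2}-{\bf T}_\theta^{-,-}=A_{2^{-1/2}}(1+\widetilde T_\theta^-)A_{2^{-1/2}}\ge 0$ immediately from $\widetilde T_\theta^-\ge -1$, without any edge-limit argument. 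For the counting bound the paper just cites \cite{BEKS}; your one-line trace estimate is cleaner.

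Two points deserve attention. First, your claim that $\mu_1^\star$ is \emph{strictly} increasing is not supplied by Remark~\ref{remarkTildeTTheta2PiOver3}(c), which only asserts monotonicity. At $\theta=2\pi/3$ you have $\mu_1^\star=1$, and to get $\mu_1^\star>1$ for $\theta>2\pi/3$ you need strictness there; the paper secures this by an explicit Feynman--Hellmann computation of $\partial_\theta\widetilde E_1^-(2\pi/3)\ne 0$, then invokes monotonicity to propagate to all of $(2\pi/3,\pi)$. You should add this (or an analyticity argument). Second, the ``main technical hurdle'' you flag is in fact painless once you observe, as above, that $C_k^*C_k=M_k(-\widetilde T_\theta^-)M_k$ with $M_k:=A_{2^{-1/2}}/A_k$ a multiplication operator bounded by $1$ converging strongly to the identity; since $-\widetilde T_\theta^-$ is compact (indeed trace class by Lemma~\ref{boundOnTiltdeTThetaMoins}), this gives norm convergence $C_k^*C_k\to -\widetilde T_\theta^-$, hence the eigenvalue convergence you need. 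No separate uniform trace-norm control is required.
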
%
%%%%%%%%%%
\begin{proof} Here we have to consider ${\bf T}_\theta^{-,-}$. (i) We have using the operator $\widetilde T_\theta^-$, see (\ref{defTTildeThetaMoins}), that for all ${\pi\over2}\le \theta\le {2\pi\over3}$
\begin{eqnarray*}
2^{-{1\over2}}-{\bf T}_\theta^{-,-}
=2^{-{1\over2}}-\hat T_0+\hat T_\theta^-
=(2^{-{1\over2}}-\hat T_0)^{1\over2}(1+\widetilde T_{\theta}^-)(2^{-{1\over2}}-\hat T_0)^{1\over2}\ge0
\end{eqnarray*}
since $\widetilde T_\theta^-\ge-1$, see Remark~\ref{remarkTildeTTheta2PiOver3}(b) and (c).  Thus ${\bf T}_\theta^{-,-}\le 2^{-{1\over2}}$ which implies that it cannot have an eigenvalue larger than $2^{-{1\over2}}$.
To prove (ii) it is sufficient  to show that there exists $k>2^{-{1\over2}}$ so that $\inf \spect(k-{\bf T}_\theta^{-,-})<0$ for all $\theta>2\pi/3$. 
We first establish that the lowest eigenvalue $\widetilde E_1^-(\theta)$ of $\widetilde T_\theta^-$ is strictly smaller than $-1$ for all $2\pi/3<\theta<\pi$. Let $\varphi$ denote  the normalized    eigenvector of $\widetilde T_{2\pi\over3}^-$ associated to the eigenvalue $-1$, see (\ref{exactPbVp3}).  By the Feynman Hellman theorem one has
\begin{eqnarray*} 
{d\over d\theta}\widetilde E_1^-\left(2\pi\over3\right)
&=& \left({d\over d\theta}\widetilde T_\theta^-\varphi,\varphi\right)_{|_{\theta={2\pi\over 3}}}\\
&=&
{1\over \frac{1}{18} \left(6-\sqrt{3} \pi \right)}
\int_\R\left(\partial_\theta T_\theta^{-}(p,q)  \right)_{|_{\theta={2\pi\over3}}}
\frac{1}{p \left(2 p^2+3\right)} \frac{1}{q \left(2 q^2+3\right)}dpdq\\
&=&-\frac{\pi }{2 \left(6-\sqrt{3} \pi \right)}\sim-2.81201.
\end{eqnarray*} 
This shows that in a right neighbourhood of $2\pi/3$ we have that $\widetilde E_1^-(\theta)<-1$. Thanks to Remark~\ref{remarkTildeTTheta2PiOver3}(c), this remains true for all $\theta>2\pi/3$. Let $A_k:=(k-\hat T_0)^{1\over2}$ then
$$
k-{\bf T}_\theta^{-,-}=A_k(1+\widetilde T_\theta^-(k)) A_k,\quad{\rm with}\quad
\widetilde T_\theta^-(k):={A_{2^{-{1\over2}}}\over A_k} \widetilde T_\theta ^-{A_{2^{-{1\over2}}}\over A_k}.
$$
Clearly $\widetilde T_\theta^-(k)$ converges in norm to $\widetilde
T_\theta^-$ as $k\to2^{-{1\over2}}$. Since for all $\theta>2\pi/3$
there exists $a>0$ so that $\inf \spect(\widetilde T_\theta^-)\le
-1-a$, one can find $k$ close enough to $2^{-{1\over2}}$ so that 
$\inf\spect(\widetilde T_\theta^-(k))\le -1-a/2$, i.e. there exists
$\varphi\in L^2(\R)$ such that $((1+\widetilde
T_\theta^-(k))\varphi,\varphi)\le -a/2\|\varphi\|^2$. Let
$\psi:=A_k^{-1}\varphi$; we finally get that
$$
(k-{\bf T}_\theta^{-,-}\psi,\psi)\le -{a\over2}\| A_k\psi\|^2\le -{(k-2^{-{1\over2}})a\over2}\|\psi\|^2
$$
which shows that $\inf \spect(k-{\bf T}_\theta^{-,-})<0$. The bound on the number of isolated eigenvalues is standard, see e.g. the proof of Theorem~3.3 in \cite{BEKS}.
 \end{proof}
 %%%%%%%%%%%
\begin{proposition}%
%%%%%%%%%%%
$H_\theta^{+,+}$ has at least one isolated eigenvalue for all $\theta\in(0,\pi)$ and this eigenvalue is unique in $[\pi/3,2\pi/3]$. 
%%%%%%%%%%%
\end{proposition}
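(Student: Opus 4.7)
The plan is to reduce both claims via Lemma~\ref{LemCorrespondenceBetweenHAndT} to statements about the skeleton operator ${\bf T}_\theta^{+,+} = \hat T_0 + \hat T_\theta^+$ acting on $\Ran \Pi^+ \subset L^2(\R)$: since $\spect \hat T_0 = [0, 2^{-1/2}]$ and $\hat T_\theta^+$ is trace class (Theorem~\ref{ROODTh}), the essential spectrum is $[0, 2^{-1/2}]$, and each discrete eigenvalue $k > 2^{-1/2}$ of ${\bf T}_\theta^{+,+}$ corresponds (with multiplicity) to an isolated eigenvalue $-k^2 < -1/2$ of $H_\theta^{+,+}$.

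For existence, I would produce a unit trial vector that beats $2^{-1/2}$. Take the even normalized Gaussian $\psi_\sigma(p) := (\sigma/(2\pi))^{1/4} e^{-\sigma p^2/4}$. A Laplace expansion as $\sigma \to \infty$ gives
$$(\hat T_0 \psi_\sigma, \psi_\sigma) = 2^{-1/2} - \frac{2^{-1/2}}{4\sigma} + O(\sigma^{-2}),$$
while the continuity of $\hat T_\theta^+(p, q)$ at the origin together with $\hat T_\theta^+(0, 0) = 1/(2\pi|\sin\theta|) > 0$ yields, after rescaling $p, q$ by $\sqrt\sigma$,
$$(\hat T_\theta^+ \psi_\sigma, \psi_\sigma) \sim \frac{\sqrt 2}{|\sin\theta|\sqrt{\pi\sigma}}.$$
Since the positive $\sigma^{-1/2}$ contribution beats the negative $\sigma^{-1}$ correction for $\sigma$ large, $\langle {\bf T}_\theta^{+,+} \psi_\sigma, \psi_\sigma \rangle > 2^{-1/2}$, whence $\sup \spect {\bf T}_\theta^{+,+} > 2^{-1/2}$ and we get at least one discrete eigenvalue. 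Simplicity of this top eigenvalue follows from ergodicity of $\hat T_0 + \hat T_\theta^+$ on $\Ran \Pi^+$: multiplication by $\hat T_0$ preserves positivity, the off-diagonal kernel $\hat T_\theta^+(p,q)$ is strictly positive, so Theorem~XIII.43 of \cite{RS4} applies.

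For uniqueness in $[\pi/3, 2\pi/3]$ (equivalently, via $H_\theta \cong H_{\pi - \theta}$, in $[\pi/2, 2\pi/3]$), I would follow the Birman-Schwinger route of the preceding proposition. Setting
$$\widetilde T_\theta^+(k) := (k - \hat T_0)^{-1/2} \hat T_\theta^+ (k - \hat T_0)^{-1/2}, \quad k > 2^{-1/2},$$
which is positive compact and monotonically decreasing in $k$, an eigenvalue $k > 2^{-1/2}$ of ${\bf T}_\theta^{+,+}$ corresponds to $1 \in \spect \widetilde T_\theta^+(k)$, and Birman-Schwinger counting bounds the total number of such $k$'s by the number of eigenvalues $\geq 1$ of the limit $\widetilde T_\theta^+ := \lim_{k \downarrow 2^{-1/2}} \widetilde T_\theta^+(k)$. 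Unlike the $(+,-)$ case, this limit is not trace class, because the $|p|^{-2}$ diagonal singularity of $(2^{-1/2} - \hat T_0)^{-1}$ meets the nonzero value $\hat T_\theta^+(0, 0) > 0$. The remedy I propose is to subtract the known top eigenprojection $P_1$ first: decompose $\hat T_0 + \hat T_\theta^+ = k_1(\theta) P_1 + B_\theta$ with $B_\theta P_1 = 0$, and establish the analogue of Lemma~\ref{boundOnTiltdeTThetaMoins},
$$\bigl\|(2^{-1/2} - \hat T_0)^{-1/2} \bigl(\hat T_\theta^+ - (k_1(\theta) - 2^{-1/2}) P_1\bigr) (2^{-1/2} - \hat T_0)^{-1/2} \bigr\|_1 < 1,$$
using the rank-one operator decomposition of $\hat T_\theta^+$ in Theorem~\ref{ROODTh} and the geometric damping $\cos^{2n}\theta$ with $|\cos\theta|\leq 1/2$ on $[\pi/3, 2\pi/3]$.

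The main obstacle is the last trace-norm bound: one has to perform the $n$- and $s$-integrations for $\hat T_\theta^+$ (parallel to the proof of Lemma~\ref{boundOnTiltdeTThetaMoins}) with the additional subtraction of the singular rank-one eigen-part, and then verify numerically that the resulting trace norm stays strictly below $1$ uniformly on the whole interval $[\pi/2, 2\pi/3]$. Once this explicit estimate is in hand, standard Birman-Schwinger counting (e.g., the proof of Theorem~3.3 in \cite{BEKS}) closes the argument.
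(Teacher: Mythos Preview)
Your existence argument is correct and matches the paper's: a narrow even trial function concentrated at the origin picks up the positive value $\hat T_\theta^+(0,0)$ at order $\sigma^{-1/2}$ (or $\varepsilon$ in the paper's normalization), which dominates the $O(\sigma^{-1})$ defect in $\langle\hat T_0\rangle$. The ergodicity remark is also fine.

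The uniqueness half, however, has a genuine gap. Your proposed subtraction $(k_1(\theta)-2^{-1/2})P_1$ does \emph{not} cure the $|p|^{-1}|q|^{-1}$ singularity of the conjugated kernel at the origin. From the eigenvalue equation one gets
$(k_1-2^{-1/2})\phi_1(0)=\int \hat T_\theta^+(0,q)\phi_1(q)\,dq$,
so $(k_1-2^{-1/2})|\phi_1(0)|^2=\overline{\phi_1(0)}\int \hat T_\theta^+(0,q)\phi_1(q)\,dq$, which has no reason to equal $\hat T_\theta^+(0,0)=1/(2\pi|\sin\theta|)$. Hence $(2^{-1/2}-\hat T_0)^{-1/2}\bigl(\hat T_\theta^+-(k_1-2^{-1/2})P_1\bigr)(2^{-1/2}-\hat T_0)^{-1/2}$ is still not trace class, and the estimate you announce cannot be carried out. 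Even setting this aside, your scheme requires explicit control of $k_1(\theta)$ and $P_1$ to perform the $n$- and $s$-summations ``with the additional subtraction'', and neither is available away from the exactly solvable angles.

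The paper bypasses this entirely. It first observes that the rescaled family ${}^\sharp{\bf T}_\theta^{+,+}$ is monotone in $\theta$ on $[\pi/2,\pi)$ and symmetric under $\theta\mapsto\pi-\theta$, so it suffices to prove uniqueness at the single angle $\theta=2\pi/3$. There one compares downward: $H_{2\pi/3}\ge H$ with $\lambda=-1$ in (\ref{RescaledThreePartHamiltonian}), and the latter is McGuire's symmetric three-body model, known to have exactly one bound state \cite[\S IV.D]{McG}. Monotonicity plus one exact comparison replaces the Birman--Schwinger estimate you were attempting.
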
%
%%%%%%%%%%
\begin{proof} Here we have to consider ${\bf T}_\theta^{+,+}$.
Take a smooth even function $j\in C_0^\infty(\R,\R_+^\star)$,
such that $\int_{\R} j(x) dx=1$. For every
$\varepsilon>0$ define $\psi_\varepsilon(x)=(1/\varepsilon) j(x/\varepsilon)$ and 
$$
\phi_\varepsilon := \frac{\sqrt{\varepsilon}}{||j||}\psi_\varepsilon.
$$
We have $||\phi_\varepsilon||=1$, while $\psi_\varepsilon$ is an
approximation of Dirac's distribution. An elementary calculus gives
$$
(\hat T_0\psi_\varepsilon,\psi_\varepsilon)={1\over\sqrt{2}}+\OO(\varepsilon^2),\quad (\hat T_\theta^+\psi_\varepsilon,\psi_\varepsilon)=\varepsilon \hat T_\theta^+(0,0)+\OO(\varepsilon^2)
$$
and since $\hat T_\theta^+(0,0)=\hat T_\theta(0,0)=\pi|\sin(\theta)|/2>0$ it follows that ( by taking $\varepsilon>0$ small enough)
$$
\sup_{\psi\in L^2(\R_+)} ({\bf T}_\theta^{+,+}\psi,\psi)>{1\over\sqrt{2}}=\sup\spect_{\rm ess} {\bf T}_\theta^{+,+}.
$$
Therefore ${\bf T}_\theta^{+,+}$ has at least one eigenvalue larger than
$1\over\sqrt{2}$.  Thanks to the monotonicity of ${^\sharp\bf T}^{+,+}_\theta$ ( see below) and its symmetry w.r.t $\pi/2$,  it sufficient to show the uniqueness, that ${^\sharp\bf T}^{+,+}_{2\pi/3}$ or equivalently $H_{2\pi/3}^{+,+}$ has at most one isolated eigenvalue. On the other hand $H_{2\pi/3}$ is bounded below by $H$ of (\ref{RescaledThreePartHamiltonian}), with $\lambda=-1$, which is shown to possess a unique bound state by Mc Guire in \cite[\S IV. D]{McG}.
\end{proof}

To prove the monotonicity, we remark that 
${\bf T}_\theta^{\alpha,\alpha}$ is unitarily equivalent with 
${^\sharp{\bf T}}_\theta^{\alpha,\alpha}:= 
(2+p^2\sin^2(\theta))^{-{1\over2}}+\alpha{^\sharp T}_\theta^\alpha$, 
see (\ref{defTThetaSharp}). Since $\pm {^\sharp T}_\theta^\pm\ge0$ for
all $\theta\in[\pi/2,\pi)$, and thanks to the explicit and simple
dependence on $\theta$ of ${^\sharp{\bf T}}_\theta^{\alpha,\alpha}$,
we infer that both families, $\alpha=\pm1$, are monotonously
increasing as functions of $\theta$, and therefore their eigenvalues
$k$ have the same property. It follows that $-k^2$, the eigenvalues of $H_\theta$ are decreasing. Using the bounds $\|T_\theta\|\le 2^{-{1\over2}}$, see Remark~\ref{remTTheta}(a), it follows that $\| {\bf T}_\theta^{\alpha,\alpha}\|\le \sqrt{2}$ and consequently $H_\theta\ge -2$. 
We gather in a final theorem what we have established so far. Clearly this problem has two mirror symmetries. Following \cite{BEPS} we shall call {\sl axis of the scissor} the one which lies in the smaller angle, i.e. the $x$ axis with our notations and {\sl second axis of the scissor} the other one. We warn the reader that our $\theta$ is not the one of \cite{BEPS}. 

%%%%%%%%%
\begin{theorem}\label{thQS}%
%%%%%%%%%
(i) The Hamiltonian $H_\theta$ has no isolated bound state which is odd with respect to the axis of the scissor. 
\vskip1mm
\nid 
(ii) It has no isolated bound state which is odd with respect to the second axis of the scissor when $\pi/2\le\theta\le 2\pi/3$. It has at least one isolated bound state which is odd with respect to this second axis when $2\pi/3<\theta<\pi$. The number of such bound states is bounded above by  $\|\widetilde T_\theta^-\|_1$, and therefore by the bound given in Lemma~ \ref{boundOnTiltdeTThetaMoins}.
\vskip1mm
\nid 
(iii) It has 
at least one bound state for all $0<\theta<\pi$ which is even with respect to both axis of the scissor and unique for $\theta$ in $[\pi/3,2\pi/3]$. 
\vskip1mm
\nid 
{ (iv) All bound states of $H_\theta$ are  bounded below by $-2$ and monotonously 
decreasing with respect to $\theta$ on $[\pi/2,\pi)$.}
%%%%%%%%
\end{theorem}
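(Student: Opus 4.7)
The plan is to read parts (i)--(iii) directly off the decomposition $H_\theta=\bigoplus_{\alpha,\beta}H_\theta^{\alpha,\beta}$ by matching the scissor's geometric parity axes with the joint spectral subspaces of $\Pi_x$ and $\Pi_y$. A bound state odd with respect to the axis of the scissor (the $x$-axis) sits in $\Ran\Pi_x^-$, i.e.\ in the $\alpha=-1$ sector, so it would contribute only to $H_\theta^{-,+}\oplus H_\theta^{-,-}$; the first proposition above shows both summands have empty point spectrum, which gives (i). Similarly, odd with respect to the second axis means $\beta=-1$, so candidate bound states live in $H_\theta^{+,-}\oplus H_\theta^{-,-}$; the $H_\theta^{-,-}$ summand is again empty, and the count for $H_\theta^{+,-}$ together with the bound $\|\widetilde T_\theta^-\|_1$ is exactly the content of the second proposition, yielding (ii). Part (iii) is the third proposition applied to the $\alpha=\beta=+1$ sector.

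For the lower bound in (iv), I would invoke Theorem~\ref{HtoS} and Lemma~\ref{LemCorrespondenceBetweenHAndT}: every eigenvalue $E<-{1\over 2}$ of $H_\theta$ must be of the form $E=-k^2$ with $k>2^{-1/2}$ an eigenvalue of some ${\bf T}_\theta^{\alpha,\beta}$. Remark~\ref{remTTheta}(a) gives $\|\hat T_0\|,\|\hat T_\theta^\pm\|\le 2^{-1/2}$, so $\|{\bf T}_\theta^{\alpha,\beta}\|\le\sqrt{2}$, forcing $k\le\sqrt{2}$ and hence $E\ge -2$.

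The monotonicity is the only step asking for a small argument. Only the sectors $(+,+)$ and $(+,-)$ carry discrete spectrum; by Lemma~\ref{LemCorrespondenceBetweenHAndT} these correspond to ${\bf T}_\theta^{+,+}=\hat T_0+\hat T_\theta^+$ and ${\bf T}_\theta^{-,-}=\hat T_0-\hat T_\theta^-$, which the $p\to\sin(\theta)p$ scaling turns into ${^\sharp{\bf T}}_\theta^{\alpha,\alpha}=(2+p^2\sin^2\theta)^{-1/2}+\alpha\,{^\sharp T}_\theta^\alpha$. Their $\theta$-dependence is fully explicit via
$$
{^\sharp T}_\theta^+={1\over\pi}\sum_{n\ge 0}(2\cos\theta)^{2n}B_{2n},\qquad {^\sharp T}_\theta^-={1\over\pi}\sum_{n\ge 0}(2\cos\theta)^{2n+1}B_{2n+1},\quad B_n\ge 0.
$$
On $[\pi/2,\pi)$, $\sin^2\theta$ decreases and $|\cos\theta|$ increases with $\cos\theta\le 0$, so $(2+p^2\sin^2\theta)^{-1/2}$ is monotone increasing and $\alpha\,{^\sharp T}_\theta^\alpha$ is non-negative and monotone increasing for each $\alpha=\pm 1$. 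Hence ${^\sharp{\bf T}}_\theta^{\alpha,\alpha}$ is monotone increasing in $\theta$ in the quadratic-form sense, and the min-max principle makes every eigenvalue $k$ of ${^\sharp{\bf T}}_\theta^{\alpha,\alpha}$ monotone non-decreasing; positivity of $k$ then turns this into monotone non-increase of $E=-k^2$, completing (iv). The whole theorem is a synthesis of the three preceding propositions with the norm/monotonicity discussion, so no substantial obstacle remains; the only place that calls for care is matching the geometric parity conventions (odd/even with respect to each scissor axis) to the $(\alpha,\beta)$-labelling of the joint spectral subspaces of $\Pi_x,\Pi_y$.
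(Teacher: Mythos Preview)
Your proposal is correct and follows essentially the same route as the paper: parts (i)--(iii) are read off the three preceding propositions via the parity--sector dictionary of Lemma~\ref{LemCorrespondenceBetweenHAndT} and Table~\ref{tableSkeletons}, and part (iv) is exactly the paper's argument using the scaling to ${^\sharp{\bf T}}_\theta^{\alpha,\alpha}$ together with $\|T_\theta\|\le 2^{-1/2}$ from Remark~\ref{remTTheta}(a). Your monotonicity step just makes explicit, via the $B_n$-expansions, what the paper calls ``the explicit and simple dependence on $\theta$'' of ${^\sharp{\bf T}}_\theta^{\alpha,\alpha}$.
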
%
%%%%%%%%

%%%%%%%%%%%%%%%%
%%%%%%%%%%%%%%%%
\section{Concluding remarks and open problems}%%
%%%%%%%%%%%%%%%%%
%%%%%%%%%%%%%%%%%
We are far from having found the answers to  all the questions about the quantum scissor of \S~\ref{sectionQS}. Let us review these questions; most of them are already in \cite[\S III]{BEPS}: 
\begin{enumerate}
\item Every bound state is even w.r.t. the scissor axis: {\em done, see Th~\ref{thQS}(i)}.
\item With respect to the second axis the bound states can have both parities ({\em done, see Th~\ref{thQS}(ii) and (iii)}) which are alternating if the bound states are arranged according to their energies: {\em not done}. 
\item As the angle $\theta$ gets larger new bound states emerge from the continuum. Fnd the correponding critical values $\theta_c$ of $\theta$: {\em very partially done}.    Thanks to Th~\ref{thQS}(ii,iii) we know that the first 
critical value of $\theta$ is $2\pi/3$.  Compute the asymptotic of the number of bound state as $\theta\to\pi$:  {\em not done}.
\item All the bound state energies are monotonically decreasing
  functions of $\theta$: {\em {done}}.
\item  Do we have a bound state or a resonance at the threshold, 
when a bound state emerges from the continuum? {\em Not done}.
\item The other way around, when $\theta$ decreases, do the bound
  states become resonances? Can we follow them? {\em Not done}.
\item Can one expand the new bound state emerging from the continuum as a function of $\theta-\theta_c$? {\em Not done}.
\end{enumerate}

Concerning the integral operator $T_\theta$, see \S~\ref{SectionTTheta}, 
\begin{enumerate}
\item can we enlarge the list of exact spectral results, see \S~\ref{ExactEigenvaluesAndEigenvectorsSect}? 
{\item Are all eigenvalues of $T_\theta^\pm$ simple?}
\item { Is it true that $\theta\to T_\theta^\pm$} are monotonous?
\end{enumerate}

%Let us come back to the first issue of the present work mentioned in
%the introduction: we are far from being able to give a definite
%answer. Let us nevertheless indicate that in the study of the three 
%particle case we have found remarkable values linked to various 
%spectral properties, either rigorously or numerically, see e.g. \cite{CDR1}, at least for 
%particular choices of masses and charges.  

%%%%%%%%%%%%%%%%%
\bibliographystyle{amsalpha}%

\begin{thebibliography}{A}%
%%%%%%%%%%%%%%


%
\bibitem[AS]{AS}
M. Abramowitz, I.A. Stegun : \textit{ Handbook of Mathematical functions}, Dover 1970
%
\bibitem[AlGH-KH-E]{AlGH-KH-E}
Albeverio, S.; Gesztesy, F.; H¿egh-Krohn, R.; Holden, H.
{\em Solvable models in quantum mechanics.}
Second edition. With an appendix by Pavel Exner. AMS Chelsea Publishing, Providence, RI, 2005.
%
\bibitem[BaSoY]{BaSoY}
B. Baumgartner, J.-Ph. Solovej, J. Yngvason:  \textit{Atoms in strong magnetic
fields: The high field limit at fixed nuclear charge}, Commun. Math. Physics
{\bf 212} (3), 703 - 724 (2000)

%
\bibitem[BD]{BD} R. Brummelhuis, P. Duclos:  \textit{Effective Hamiltonians for atoms in very strong
magnetic fields}. Journ. Math. Phys. {\bf 47} (3), 032103 (2006)  
%
\bibitem[BEKS]{BEKS}
J.F.~Brasche, P.~Exner, Yu.A.~Kuperin, P.~\v{S}eba:
{\em Schr\"odinger operators with singular interactions},  J. Math. Anal.
Appl. {\bf 184} (1994), 112-139.
%
\bibitem[BEPS]{BEPS} 
E.N. Bulgakov, P. Exner, K.N. Pichugin, A.F. Sadreev:
\textit{ 
Multiple bound states in scissor-shaped waveguides}
Phys. Rev. B{\bf 66} (2002), 155109

%
\bibitem[CDP]{CDP}
Cornean H., Duclos P., Pedersen Th. G: \textit{ One dimensional models of excitons
in carbon nanotubes.} Few-Body Systems {\bf 34}, 155-161, 2004


\bibitem[CDR1]{CDR1}
H. Cornean,  P. Duclos, B. Ricaud, \textit{Three quantum charged
particles interacting through delta potentials}, Few-Body Systems {\bf 38}(2-4) 
(2006), 125--131.

\bibitem[CDR2]{CDR2}
Cornean H., Duclos P., Ricaud B.: {\em Effective models for excitons in carbon nanotubes}, 
 Ann. Henri Poincar\'e  {\bf 8}  (2007),  no. 1, 135--163. 

%
\bibitem[EI]{EI} 
P. Exner, T. Ichinose: \textit{  Geometrically induced spectrum in
curved leaky wires.} J. Phys. {\bf A34}(2001), 1439-1450  
%
\bibitem[EN]{EN}
P. Exner, K. Nemcova: \textit{ 
Leaky quantum graphs: approximations by point interaction Hamiltonians}
J. Phys. A{\bf 36} (2003), 10173-10193 
%
%\bibitem[FK2]{FK2}
%P. Kuchment, A. Figotin: {\em 
%Spectral Properties of Classical Waves in High-Contrast Periodic Media.}
%SIAM Journal on Applied Mathematics {\bf 58}(2),  683--702, 1998 

%
\bibitem[KK]{KK}
Kuchment, P.; Kunyansky: {\em L. Differential operators on graphs and photonic crystals. Modeling and computation in optics and electromagnetics.}  Adv. Comput. Math.  {\bf16}  (2002),  no. 2-3, 263--290

% 
\bibitem[LL]{LL}
Elliott H. Lieb and Werner Liniger: \textit{Exact Analysis of an Interacting Bose Gas. I. The General Solution and the Ground State}, Phys. Rev. {\bf 130}(4) 1605--1616.

%
\bibitem[McG]{McG}
McGuire J.~B.: {\em Study of exactly soluble one-dimensional $N$-body problems.}  J.
Math. Phys.
{\bf 5}  (1964)  622--636.
%
%
\bibitem[P]{P}
T.G. Pedersen: \textit{ Variational approach to excitons in carbon nano\-tubes}
Phys. Rev B {\bf 67}, 073401-1..4 (2003)

%
\bibitem[RS4]{RS4}
Reed~M., Simon~B.: {\em Methods of modern mathematical physics IV.
Analysis of operators}. Academic Press 1978

%
\bibitem[Si]{Si}
Simon B. : \textit{ Trace ideals and their applications.} Matematical Surveys and Monographs, {\bf 120} Amer. Math. Soc., Providence, RI, 2000
%
\bibitem[Spr]{Spr} L.~Spruch: {\em A report on some few-body problems in atomic
physics}, pp. 715-725 in {\em Few Body Dynamics}, A.N. Mitra et al. eds,
North-Holland 1976


%
\bibitem[vD]{vD}
van Diejen, J. F.: {\em Diagonalization of an integrable discretization of the repulsive delta Bose gas on the circle. } Comm. Math. Phys.  {\bf 267}  (2006),  no. 2, 451--476.
\end{thebibliography}
%%%%%%%%%%%%%%%

\end{document}